\documentclass[a4paper,12pt]{article} 
\usepackage{graphicx,amssymb,color,amsthm,amsmath}
\newtheorem{theorem}{Theorem}
\usepackage{epsfig,float}
\usepackage{color}
\usepackage{xcolor} 
\usepackage{soul}
\usepackage{subfigure}
\usepackage[ruled, ]{algorithm2e}
\usepackage{booktabs}
\usepackage{xurl}
\usepackage{float}
\usepackage{lineno} 

\usepackage[colorlinks=true,citecolor=blue, linkcolor=blue, urlcolor=blue]{hyperref}

\allowdisplaybreaks
\allowdisplaybreaks
\begin{document}

\centerline{\LARGE \bf
Internet malware propagation: Dynamics and }

\medskip

\centerline{\LARGE \bf
control through SEIRV epidemic model with  }
 \medskip

 \centerline{\LARGE \bf  relapse and intervention}


\vspace*{1cm}

\centerline{\bf Samiran Ghosh$^{a,}$\footnote{Corresponding author. Email: samiran.4pi@csir.res.in} V Anil Kumar$^{a}$}

\vspace{0.5cm}

\centerline{ $^a$ CSIR Fourth Paradigm Institute, Wind Tunnel Road, Bangalore - 560037, India}


\vspace{2cm}

\noindent
{\bf Abstract.}
Malware attacks in today’s vast digital ecosystem pose a serious threat. Understanding malware propagation dynamics and designing effective control strategies are therefore essential. In this work, we propose a generic SEIRV model formulated using ordinary differential equations to study malware spread. We establish the positivity and boundedness of the system, derive the malware propagation threshold, and analyze the local and global stability of the malware-free equilibrium. The separatrix defining epidemic regions in the control space is identified, and the existence of a forward bifurcation is demonstrated. Using normalized forward sensitivity indices, we determine the parameters most influential to the propagation threshold. We further examine the nonlinear dependence of key epidemic characteristics on the transmission rate, including the maximum number of infected, time to peak infection, and total number of infected. We propose a hybrid gradient-based global optimization framework using simulated annealing approach to identify effective and cost-efficient control strategies. Finally, we calibrate the proposed model using infection data from the “Windows Malware Dataset with PE API Calls” and investigated the effect of intervention onset time on averted cases, revealing an exponential decay relationship between delayed intervention and averted cases.

\vspace{1cm}

\noindent
{\bf Keywords:} Malware propagation; Epidemic model; SEIRV model; Dynamical system; Compartmental model; Hybrid gradient-based optimization; Simulated annealing

\medskip

\vspace*{0.5cm}

\setcounter{equation}{0}
\setcounter{section}{0}
\setcounter{page}{1}

\section{Introduction}

The Internet of Things can be defined as a network of smart objects capable of self-organization, information sharing, and adapting to environmental changes \cite{madakam2015internet}. Equipped with specific features like sensors, software, processing units, and communication capabilities, IoT devices have become an integral part of daily human life. The world of IoT devices has been growing exponentially, with applications in diverse fields such as smart homes, wearable technology, healthcare, remote monitoring appliances, transportation systems, and more. The global number of IoT devices has increased exponentially over the last decade. Figure~\ref{statista1} shows the increasing trend in the worldwide number of IoT connected devices as well as the worldwide number of attacks on IoT devices.

\begin{figure}[ht!]
\begin{center}
\includegraphics[scale=0.45]{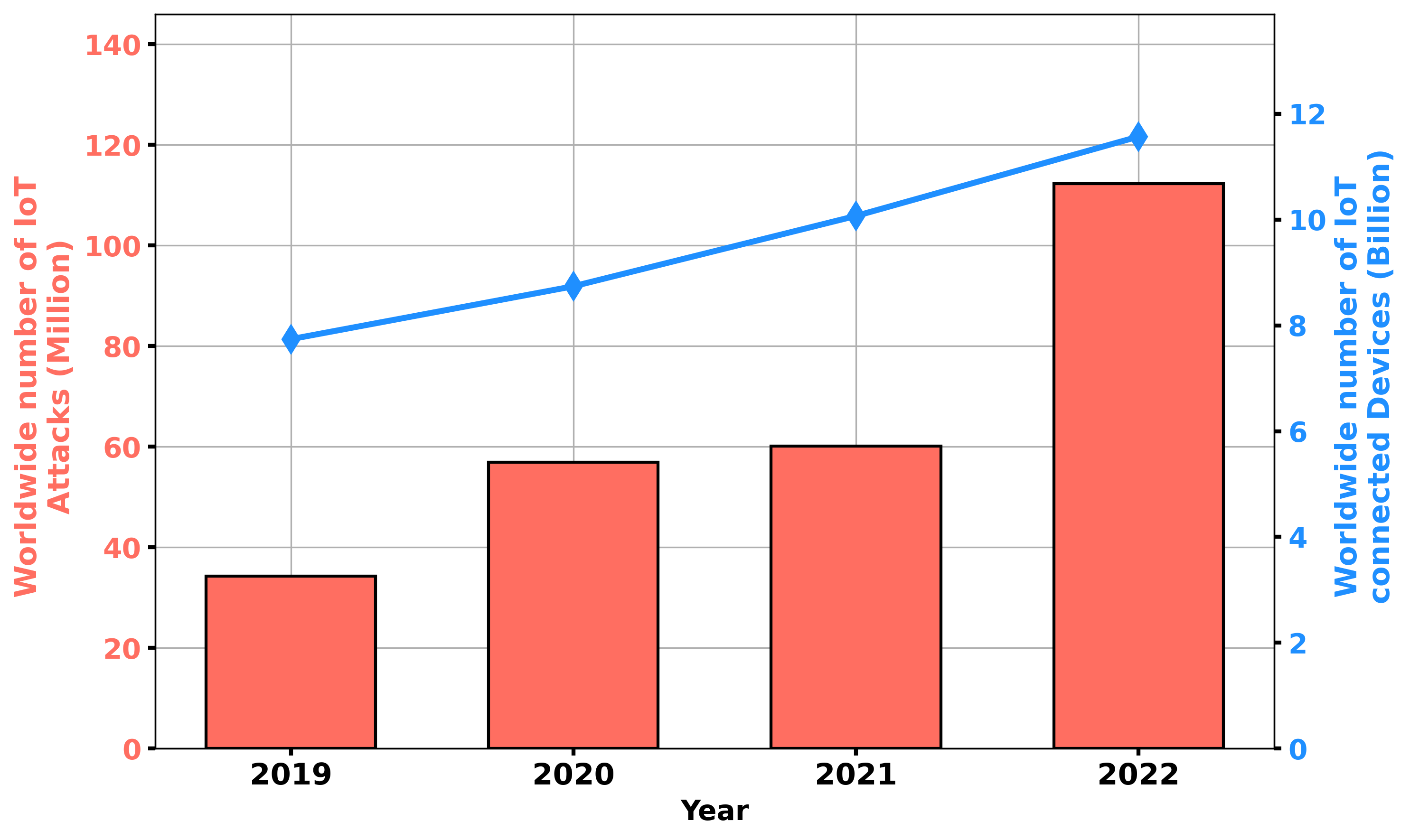}
\caption{The worldwide number of IoT attacks and IoT connected devices are represented by the red bars and blue line, respectively. The red bars correspond to the left vertical axis, while the blue line corresponds to the right vertical axis. The references for these data are \cite{lemevs2023role,danladi2022low}.}
\label{statista1}
\end{center}
\end{figure}

Ensuring the security of this vast and growing digital ecosystem remains a major concern, as cyber attacks continue to pose significant threats. Figure~\ref{statista1} illustrates key statistics on the rising trend of IoT cyber attacks as well as IoT connected population. The vulnerability of these IoT devices arises due to both technological limitations and human behavior. Due to the use of default credentials, lack of strong security features, and lack of routine software updates, the devices can get malware infections. Malware propagation in an IoT network typically begins with an attacker communicating through a Command-and-Control (C\& C) server, which then disseminates malware to infected IoT devices and subsequently spreads to other vulnerable devices (see Figure \ref{flow_iot_1}). Some common types of malware attacks on IoT devices are ransomware, viruses, worms, bots, trojans, etc. \cite{milosevic2016malware}. Through ransomware, the attackers lock the device and demand money to recover the device. Bots try to control the devices remotely and form botnets, which target large-scale attacks like DDoS. Trojans disguise themselves as legitimate programs to convince the user to install them, whereas they are actually malicious. Worms can replicate and spread automatically over networks and infect devices. Viruses infect files or software by injecting malicious, replicable code and spread to other devices through infected files or software. As per the statistics, ransomware attacks occur every 10 seconds, and as of 2023, over 70 percent of companies face this attack \cite{kovtun2024cyber}. These frequent and alarming attacks project a significant global loss in the cyber world.

\begin{figure}[ht!]
\begin{center}
\includegraphics[scale=0.44]{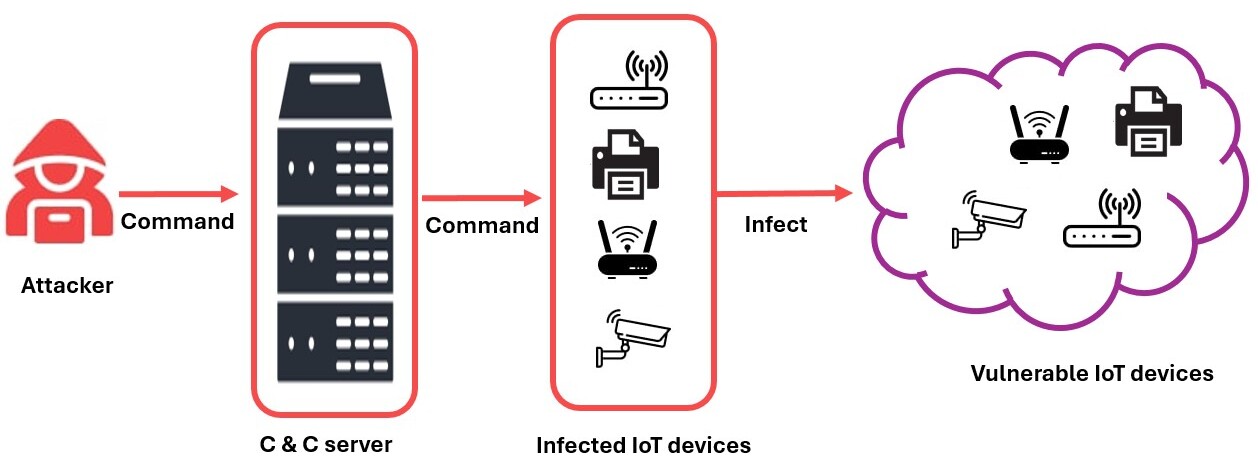}
\caption{Infection spread in IoT devices.}
\label{flow_iot_1}
\end{center}
\end{figure}

It is necessary to understand the malware propagation dynamics in IoT device networks to better control the malware spread. Similar to biological epidemics, the spread of malware in IoT networks can be modeled using the well-established theory of biological epidemics. Specifically, SI (Susceptible $\rightarrow$ Infected), SIR (Susceptible $\rightarrow$ Infected $\rightarrow$ Recovered), and SEIR (Susceptible $\rightarrow$ Exposed $\rightarrow$ Infected $\rightarrow$ Recovered) models are being used to describe malware propagation \cite{cheng2010modeling,wei2019modeling,al2019analysis}. More complex models, by introducing other aspects like the active stage, dormant stage, immune stage, awareness of users, alert mechanism etc., have been studied to capture more specific characteristics \cite{peng2025dynamic,wang2017sadi,aleja2022compartmental,godoi2023spatio}. Several studies have focused on the control strategies of malware propagation and have used Pontryagin's Maximum Principle (PMP) to obtain optimal local control \cite{kazeem2018optimal,hernandez2020optimal, liu2017web}. A detailed literature survey is given in the next section. To the best of our knowledge, we have identified some gaps in this area as follows:

\begin{itemize}
    \item In the context of malware spreading, several key characteristics of epidemic progression, such as the maximum number of infected individuals, the time to reach the maximum infection level, and the region of epidemic growth in the control parameter space, are well appreciated in biological epidemics but have not been studied in depth in malware propagation.

    \item Though there are studies in the existing literature on the optimal control of malware propagation, we found that most of the existing work relies on local optimization techniques such as PMP. However, these local optimization techniques are highly dependent on the initial guess of the control parameters, especially when the cost function is non-convex. In such cases, there is a significant probability that the local optimal control may differ from the global optimal control. Since multi-compartmental epidemic models are highly nonlinear, the corresponding cost functions are often non-convex. Therefore, local optimization methods may fail to achieve the global optimum.
\end{itemize}

In this paper, we focus on addressing the above-mentioned gaps in the theory of malware propagation. The key contributions of this work are as follows:

\begin{itemize}
    \item Propose a generic SEIRV (Susceptible $\rightarrow$ Exposed $\rightarrow$ Infected $\rightarrow$ Recovered $\rightarrow$ Vaccinated) model to describe the characteristics of malware propagation, establish the well-posedness of the model, derive the analytical expression for the malware propagation threshold.

    \item Study the stability of the malware-free equilibrium and the endemic equilibrium, and the existence of forward bifurcation.

     \item Identify the most significant parameters in the model that influence the malware propagation threshold using normalized forward sensitivity indices.

    \item Study the impact of the transmission rate and control parameters on important epidemic characteristics, such as the maximum number of infected, the time to peak infection, and the region of epidemic growth in the control parameter space.

    \item Develop a hybrid gradient-based simulated annealing global optimization algorithm that can be applied to multi-compartmental SEIRV-type models to obtain the global optimal control.

    \item Calibrate the proposed model by fitting the infection counts to the virus propagation data obtained from the “Windows Malware Dataset with PE API Calls” \cite{catak2021data,kovtun2024entropy}.
    
\end{itemize}

The rest of the paper is organized as follows: In Section~\ref{sec2}, we briefly present a literature review on related epidemic models. In Section~\ref{sec3}, we develop the model, establish its positivity and boundedness, derive analytical expressions for the malware propagation threshold and the region of epidemic growth, and study the stability of the malware-free and endemic equilibrium points. In Section~\ref{sec4}, we propose a hybrid gradient-based global optimization method. Model results using numerical simulations are discussed in Section~\ref{sec5}. The proposed model is calibrated using Windows malware dataset in Section~\ref{section_calibration}. Finally, the conclusions are presented in Section~\ref{sec6}.



\section{Epidemic models: From biological to cyber epidemics}\label{sec2}
The foundation of mathematical models to describe the transmission dynamics of biological epidemics was developed by D. Bernoulli in the 18th century and by W. O. Kermack and A. G. McKendrick in the early 20th century \cite{kermack1927contribution,kermack1932contributions,kermack1933contributions}. Following their work, several developments have emerged in epidemic modeling to describe and control biological epidemics \cite{brauer,keeling2008modeling,liu2025flexible,amine2021dynamics}.

At the end of the last century, Cohen and Murray proposed that malware propagation in the cyber world could be analyzed using biological epidemic models \cite{cohen1987computer,murray1988application}. In 1991, Kephart and White first proposed an SIS model (susceptible $\rightarrow$ infected $\rightarrow$ susceptible) to study malware propagation in cyber systems \cite{kephart1992directed}. Following this work \cite{kephart1992directed}, several developments in epidemic modeling for the cyber world have emerged.

Using the concept of multi-compartmental epidemic models, researchers have introduced models for various types of networks, such as IoT networks \cite{li2020dynamics,gardner2017using}, peer-to-peer networks \cite{ramachandran2010dynamics,feng2012modeling}, scale-free networks \cite{hosseini2016model,dadlani2014stability}, and Wi-Fi networks \cite{le2022malware}. A game-theoretic approach to epidemic modeling was explored in \cite{spyridopoulos2015game}. 

A cyber network exhibits heterogeneity across multiple dimensions, such as (a) user-level heterogeneity, (b) device-level heterogeneity, and (c) software-level heterogeneity. The impact of this heterogeneity on malware propagation has been studied by A. Alexeev et al. \cite{alexeev2016malware}. Moreover, advanced cyber network architectures encompass a variety of topologies, and the influence of network topology has been analyzed in \cite{hosseini2014malware}. In recent years, researchers have also begun to integrate machine learning techniques with compartmental epidemic models to better understand the progression of malware \cite{severt2023comparison}. Furthermore, in \cite{mahboubi2020stochastic}, the authors incorporated stochasticity into the compartmental model to account for uncertainty in the spread of the IoT botnet. Several developments of SIR-type multi-compartmental models for specific link type (e.g., infrastructure-based (INF), device-to-device (D2D) etc), and specific malware type (e.g., Internet of Things (IoT) malware, Wireless Sensor Network (WSN) malware) can be found in \cite{chen2022mobility}.

The optimal control of malware propagation is a crucial area in the study of cyber epidemics. Broadly speaking, there are two types of control strategies: vaccination of vulnerable devices and treatment of infected devices. The authors in \cite{kazeem2018optimal} studied the optimal control problem in a deterministic model using Pontryagin’s Maximum Principle (PMP). Moreover, studies such as \cite{hernandez2020optimal, liu2017web} have explored various control strategies using PMP. To the best of our knowledge, most of the existing literature on cyber epidemics use local optimization methods like PMP, which are sensitive to initial guesses and may fail for non-convex cost functions. Motivated by this limitation, in this paper, we aim to develop a global optimization algorithm for compartmental epidemic models using a gradient-based simulated annealing method to achieve more reliable control solutions.


\section{SEIRV model formulation}\label{sec3}

In this section, we develop a multi-compartmental epidemic model using a system of ordinary differential equations (ODEs) to describe the malware propagation in IoT devices. A detailed SEIRV-type model is considered to accurately capture the dynamics of self-propagating malware in IoT devices. The entire population of IoT devices is denoted by $N(t)$, which is not static and varies over time depending on the number of new devices joining the population and the natural death of the devices due to the aging of the devices. The entire population of the devices is then divided into several compartments, such as susceptible ($S(t)$), exposed ($E(t)$), infected ($I(t)$), recovered ($R(t)$) and vaccinated ($V(t)$). Note that we do not consider the natural dead devices in the population count and $N(t)=S(t)+E(t)+I(t)+R(t)+V(t)$ for all times $t \geq 0$. Different compartments of the population are briefly illustrated as follows:
\paragraph{Susceptible ($S(t)$):} The devices that are vulnerable to the malware attack. Devices may be vulnerable to malware attacks due to unpatching, default credentials, or other security flaws.

\paragraph{Exposed ($E(t)$):} The devices that are infected but not yet transmitted the infection to other devices. This compartment is basically takes care of the time delay caused due to waiting period to activate the malware, through command-and-control (C\&C) server.

\paragraph{Infected ($I(t)$):} The devices that are infected and can successfully transmit the infection to other devices. In this compartment the devices are completely compromised with the malware. These devices can be a part of the botnet and can scan the network to spread the malware.

\paragraph{Recovered ($R(t)$):} The devices that have been patched, secured or cleared from the malware. These devices gain immediate immunity; however, they may again become susceptible if users do not protect their devices after reinstallation or software upgrade, due to the evolving nature of the malware.

\paragraph{Vaccinated ($V(t)$):} The devices that are not vulnerable to malware and that have been protected through vaccination at the susceptible stage.

At every real time $t$, the new devices are joining the susceptible group with their default credentials at a rate $\Lambda$, and there is a natural death of devices at a rate $\mu$ due to device-aging. Once a malware successfully enters an IoT network, the susceptible devices get exposed due to interaction with infected devices at a rate $\beta$. Then, the exposed devices will become infectious after a certain time lag at a rate $\alpha$. From the susceptible and exposed compartments, the devices may directly join the recovered compartment at rates $\eta_1$ and $\eta_2$ respectively, if the users reset their default credentials or reset their devices on their own. Here, we consider two standard and widely adopted control strategies: vaccination of susceptible devices and treatment of infected devices, where “vaccination” represents the immunization of vulnerable devices to reduce their susceptibility, and “treatment” corresponds to the disinfection or recovery of already infected devices. The susceptible devices are vaccinated at a rate $c_1$, and they can lose immunity at a rate $\sigma_2$. The infected devices recover through treatment at a rate $c_2$. The recovered devices may lose their immunity and join the susceptible group at a rate $\sigma_1$ if the users do not protect their systems from the evolving nature of the malware. This transmission dynamics is shown in the flow diagram in Figure ~\ref{diagram1} and the governing system of ODEs is given as follows:

\begin{subequations}\label{model1}
    \begin{eqnarray}
    \frac{dS(t)}{dt} &=& \Lambda -\beta S(t) I(t) -\eta_1 S(t) +\sigma_1 R(t) + \sigma_2 V(t)-c_1 S(t)-\mu S(t),\\
    \frac{dE(t)}{dt} &=& \beta S(t) I(t) -\alpha E(t) -\eta_2 E(t)-\mu E(t),\\
    \frac{dI(t)}{dt} &=& \alpha E(t) -c_2 I(t)-\mu I(t),\\
    \frac{dR(t)}{dt} &=& \eta_1 S(t)+ \eta_2 E(t)+c_2 I(t) -\sigma_1 R(t)-\mu R(t),\\
    \frac{dV(t)}{dt} &=& c_1 S(t) -\sigma_2 V(t)-\mu V(t),
    \end{eqnarray}
\end{subequations}
with the initial conditions: $S(0)=S_0\geq 0,$ $E(0)=E_0\geq 0,$ $I(0)=I_0\geq 0,$ $R(0)=R_0\geq 0,$ and $V(0)=V_0\geq 0.$

All the parameter values involved in the model \eqref{model1} are assumed to be non-negative. The detailed descriptions of the parameters are listed in Table~\ref{tab:parameters}. It is noteworthy to mention that the parameter values in Table~\ref{tab:parameters} are for illustration purposes, since the objective of this research is to explore the qualitative nature of malware propagation and to develop methodologies that will aid future studies in this direction. However, the parameter values are taken from existing literature as default values, as referenced in Table~\ref{tab:parameters}.

\begin{figure}[ht!]
\begin{center}
	\includegraphics[scale=0.85]{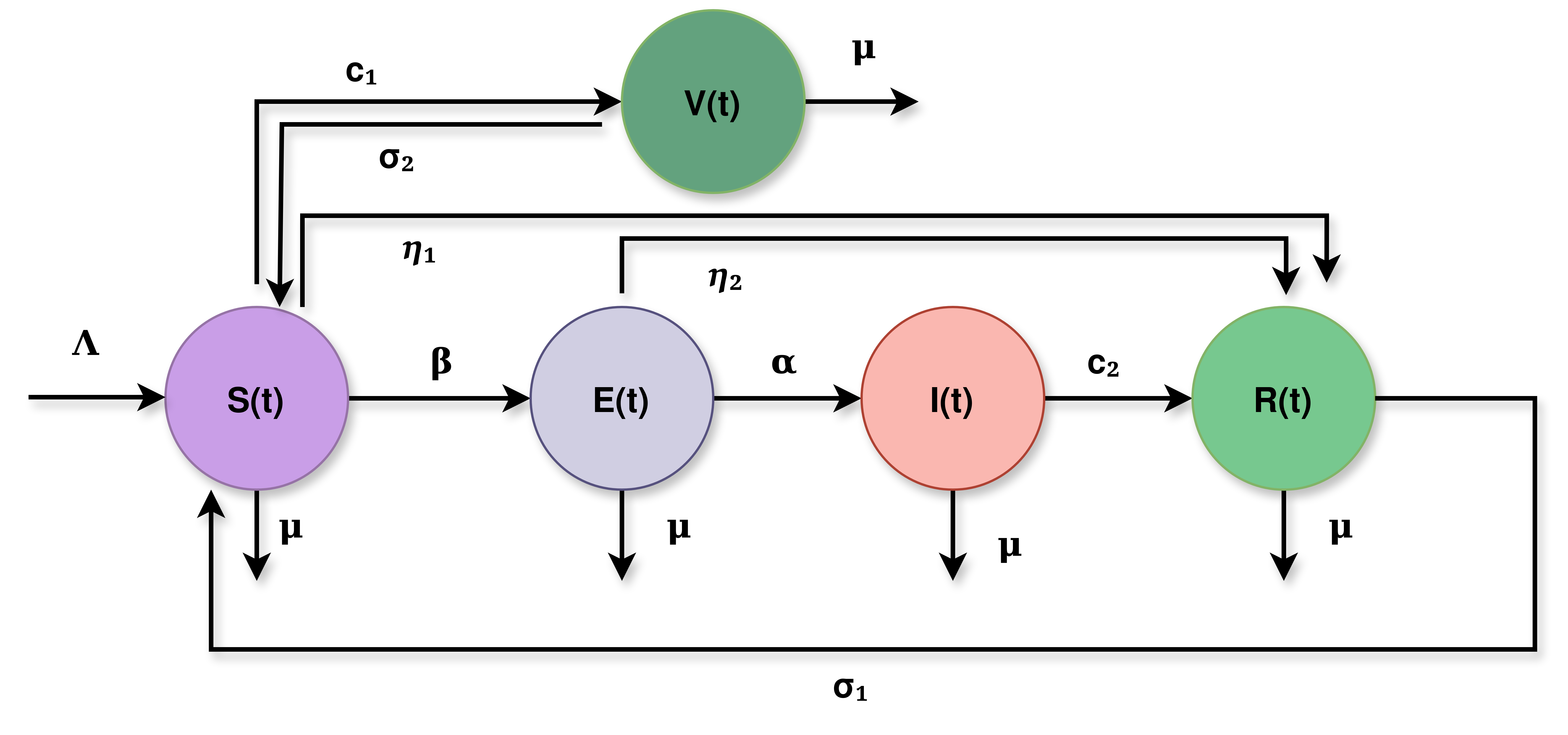}
\caption{Schematic diagram for the model \eqref{model1}.}
\label{diagram1}
\end{center}
\end{figure}

\subsection{Positivity and boundedness}
The total device population must remain nonnegative and bounded at all times. Therefore, it is essential to ensure that the governing system \eqref{model1} satisfies these two properties.
Using the approach in \cite{yang1996permanence}, we prove the positivity of the system \eqref{model1}. Set, 
$
    X(t) = (S(t), E(t), I(t),  R(t), V(t)) \in \mathbb{R}_{0+}^5.
$
Then
\begin{subequations}\scriptsize
 \begin{eqnarray*} 
\left.\frac{dS(t)}{dt}\right|_{S(t) = 0, X(t) \in \mathbb{R}_{0^+}^5}&=& \Lambda +\sigma_1 R(t) +\sigma_2 V(t) \geq 0, \;\;\;\;
\left.\frac{dE(t)}{dt}\right|_{E(t) = 0, X(t) \in \mathbb{R}_{0^+}^5}= \beta S(t) I(t) \geq 0, \\
\left.\frac{dI(t)}{dt}\right|_{I(t) = 0, X(t) \in \mathbb{R}_{0^+}^5}&=& \alpha E(t) \geq 0, \;\;\;\;
\left.\frac{dR(t)}{dt}\right|_{R(t) = 0, X(t) \in \mathbb{R}_{0^+}^5}= \eta_1 S(t) +\eta_2 E(t)+ c_2 I(t) \geq 0, \\
\left.\frac{dV(t)}{dt}\right|_{V(t) = 0, X(t) \in \mathbb{R}_{0^+}^5}&=& c_1 S(t) \geq 0. \\
\end{eqnarray*}
\end{subequations}
Consequently, it proves that $
\mathbb{R}_{0^+}^5$ is invariant under the system \eqref{model1}. To prove the boundedness of the solution, we add all the equations in \eqref{model1} and obtain:
\begin{subequations}
 \begin{eqnarray*}
\frac{dN(t)}{dt}&=& \Lambda -\mu N(t), 
 \end{eqnarray*}
 \end{subequations}
 where, $N(t) =S(t)+E(t)+I(t)+R(t)+V(t).$ This implies 
$$N(t)=\frac{\Lambda}{\mu} + \left(N(0)-\frac{\Lambda}{\mu} \right) e^{-\mu t} \leq \max \left\{ N(0), \frac{\Lambda}{\mu}\right\} =: \widetilde{N}, \;\;\;\; \forall t >0.$$
This shows that at every time point $t$ the total device population is bounded.

\begin{table}[!ht]
    \centering
    \begin{tabular}{|l|p{6cm}|p{3cm}|c|}
        \hline
        \textbf{Parameter} & \textbf{Description} & \textbf{Value/Range} & \textbf{Source} \\
        \hline
        $\beta$ & Malware transmission rate & $4 \times 10^{-9}$ & \cite{gardner2017using} \\
        \hline
         $\alpha$ & $E$ to $I$ transmission rate & 0.25 & \cite{gardner2017using} \\
        \hline
        $\Lambda$ & Rate of new incoming devices &  $0.2292 \times 10^6$ & \cite{gardner2017using}\\
        \hline
         $\eta_1$ & Average movement rate from $S$ to $R$ during botnet growth and decay phase&  $0.10415$ & \cite{gardner2017using}\\
        \hline
        $\eta_2$ & Average movement rate from $E$ to $R$ during botnet growth and decay phase&  $0.10415$ & \cite{gardner2017using}\\
        \hline
         $\sigma_1$ & Rate of relapse from $R$ to $S$&  $0.00417$ & \cite{gardner2017using}\\
        \hline
         $\sigma_2$ & Movement rate from $V$ to $S$ &  $0.00417$ & Assumption\\
        \hline
        $c_1$ & Rate of vaccination of susceptible devices &  $(0,1)$ & Assumption\\
        \hline
        $c_2$ & Rate of treatment of infected devices &  $(0,1)$ & Assumption\\
        \hline
         $\mu$ & Natural death rate of devices &  $0.0004$ & \cite{ngoy2021combining}\\
        \hline
    \end{tabular}
    \caption{Details of the model parameters along with their corresponding values.}
    \label{tab:parameters}
\end{table}

\subsection{Malware-free equilibrium}
In this subsection we derive the malware-free steady state of the system \eqref{model1} and determine its stability. In the context of cyber epidemics, the malware-free Equilibrium (MFE) represents a stable network state where no devices are compromised or under attack — that is, there are no actively infected systems present. This scenario can correspond to a situation where either the malware has been successfully eradicated, or preventive measures (such as firewalls, patches, and antivirus systems) have effectively suppressed the outbreak. We find the malware-free Equilibrium (MFE) by setting derivatives to zero and solving the following system of algebraic equations:

\begin{subequations}\label{equilibrium_equn1}
    \begin{eqnarray}
     &\Lambda -\beta S(t) I(t) -\eta_1 S(t) +\sigma_1 R(t) + \sigma_2 V(t)-c_1 S(t)-\mu S(t) = 0,\label{equilibrium_equn1a}\\
     &\beta S(t) I(t) -\alpha E(t) -\eta_2 E(t)-\mu E(t)= 0,\label{equilibrium_equn1b}\\
     &\alpha E(t) -c_2 I(t)-\mu I(t)= 0,\label{equilibrium_equn1c}\\
     &\eta_1 S(t)+ \eta_2 E(t) +c_2 I(t) -\sigma_1 R(t)-\mu R(t)= 0,\label{equilibrium_equn1d}\\
     &c_1 S(t) -\sigma_2 V(t)-\mu V(t)= 0.\label{equilibrium_equn1e}
    \end{eqnarray}
\end{subequations}


At the malware-free equilibrium (MFE), we assume that there are no exposed or infected individuals, i.e., $
E = I = 0.
$
Then solving the system \eqref{equilibrium_equn1} we obtain the MFE $\mathcal{E}^0 = (S^0, E^0, I^0, R^0, V^0)$, given by:
\[
\begin{aligned}
S^0 &= \frac{\Lambda}{\eta_1-\frac{\sigma_1 \eta_1}{\sigma_1 +\mu} + c_1 +\mu -\frac{c_1 \sigma_2}{\sigma_2 +\mu}}, \\
E^0 &= 0,\;\;
I^0 = 0, \\
R^0 &= \frac{\eta_1 S^0}{\sigma_1 + \mu},\;\;
V^0 = \frac{c_1 S^0}{\sigma_2 + \mu}.
\end{aligned}
\]
For the feasibility of the MFE $\mathcal{E}^0$, all the components of $\mathcal{E}^0$ should be non-negative. It is clear that $E^0$, $I^0$, $R^0$ and $V^0$ are non-negative for any positive parameter values. Moreover, the denominator of $S^0$ can be written as $\eta_1-\frac{\sigma_1 \eta_1}{\sigma_1 +\mu} + c_1 +\mu -\frac{c_1 \sigma_2}{\sigma_2 +\mu}= \frac{\eta_1 \mu}{\sigma_1 +\mu} +c_1 +\frac{c_1 \mu}{\sigma_2 +\mu}$, which proves the non-negativity of the MFE.

\subsection{Malware transmission threshold and the stability of MFE} 
The malware transmission threshold ($\mathcal{R}_c$) can be defined as the number of secondary infected devices caused by a single infected device in a completely susceptible network. This threshold is important because it indicates whether malware propagation will grow or decay. To derive the expression of $\mathcal{R}_c$ Following the next-generation matrix approach described in \cite{BRN}, we determine the malware transmission threshold for system (\ref{model1}). In this formulation, the compartments $S$, $R$, and $V$ are classified as uninfected classes, while $E$ and $I$ represent the infected classes. The matrix $\mathcal{F}$ corresponds to the rate of appearance of new infections, whereas the matrix $\mathcal{V}$ represents the transfer terms accounting for transitions out of the infected compartments. These matrices are given by:
$$\mathcal{F} = \left(
\begin{array}{c}
\mathcal{F}_1 \\
\mathcal{F}_2
\end{array}
\right)=\left(
\begin{array}{c}
\beta S I \\
\alpha E
\end{array}
\right),\;\;\;\;
 \mathcal{V} =\left(
\begin{array}{c}
\mathcal{V}_1 \\
\mathcal{V}_2
\end{array}
\right)= \left(
\begin{array}{c}
\alpha E +\eta_2 E + \mu E\\
c_2 I+\mu I
\end{array}
\right).$$

The Jacobian matrices corresponding to $\mathcal{F}$ and $\mathcal{V}$ are obtained by evaluating them at the malware-free equilibrium $\mathcal{E}^0$ and are respectively given by:
$$
F = \left(
\begin{array}{cc}
\frac{\partial \mathcal{F}_1}{\partial E} & \frac{\partial \mathcal{F}_1}{\partial I} \\[0.3em]
\frac{\partial \mathcal{F}_2}{\partial E}  & \frac{\partial \mathcal{F}_2}{\partial I}  
\end{array}
\right)\bigg|_{\text{at}\ \mathcal{E}^0}
=
\left(
\begin{array}{cc}
0 & \beta S^0 \\
\alpha & 0 
\end{array}
\right),
$$

and
$$
V = \left(
\begin{array}{cc}
\frac{\partial \mathcal{V}_1}{\partial E} & \frac{\partial \mathcal{V}_1}{\partial I} \\[0.3em]
\frac{\partial \mathcal{V}_2}{\partial E}  & \frac{\partial \mathcal{V}_2}{\partial I}  
\end{array}
\right)\bigg|_{\text{at}\ \mathcal{E}^0}
=
\left(
\begin{array}{cc}
\alpha + \eta_2 + \mu & 0 \\[0.3em]
0 & c_2 + \mu
\end{array}
\right).
$$
Then
$$
FV^{-1} = \left(
\begin{array}{cc}
0 & \frac{\beta S^0}{c_2+\mu} \\[0.3em]
\frac{\alpha}{\alpha+\eta_2+\mu} & 0
\end{array}
\right).
$$
The spectral radius of $FV^{-1}$ determines the malware transmission threshold $\mathcal{R}_c$, given by:
$$
\mathcal{R}_c = \sqrt{ \frac{\beta S^0 \alpha}{(c_2 + \mu)(\alpha + \eta_2 + \mu)}}.
$$
The stability of the MFE is characterized by the malware transmission threshold $\mathcal{R}_c$ and we have the following theorem:

\begin{theorem}
    The MFE $\mathcal{E}^0$ is locally asymptotically stable if $\mathcal{R}_c<1$ and unstable if $\mathcal{R}_c>1$.
\end{theorem}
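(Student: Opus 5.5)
We will linearize the system \eqref{model1} at the MFE $\mathcal{E}^0$ and read off the eigenvalues of the Jacobian directly. The alternative is to invoke the next-generation theorem of \cite{BRN}. Its hypotheses (A1)--(A5) do hold here: $\mathcal{F}\ge 0$, $\mathcal{V}$ has the required sign structure, $\mathcal{F}=\mathcal{V}^-=0$ when $E=I=0$, and the disease-free subsystem is stable. Nevertheless, we will give the explicit computation, since the block structure makes it short.

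\textbf{Step 1: block structure.} Order the variables as $(E,I,S,R,V)$. At $\mathcal{E}^0$ we have $I^0=E^0=0$, so the partial derivatives of the $E$ and $I$ equations with respect to $S,R,V$ all vanish; for example, $\partial(\beta SI)/\partial S=\beta I^0=0$. The Jacobian $J(\mathcal{E}^0)$ is therefore block upper triangular. Its diagonal blocks are
\[
J_{1}=\begin{pmatrix} -(\alpha+\eta_2+\mu) & \beta S^0\\ \alpha & -(c_2+\mu)\end{pmatrix}=F-V,
\qquad
J_{2}=\begin{pmatrix} -(\eta_1+c_1+\mu) & \sigma_1 & \sigma_2\\ \eta_1 & -(\sigma_1+\mu) & 0\\ c_1 & 0 & -(\sigma_2+\mu)\end{pmatrix}.
\]
The spectrum of $J(\mathcal{E}^0)$ is the union of the spectra of $J_1$ and $J_2$.

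\textbf{Step 2: the uninfected block $J_2$.} The column sums of $J_2$ are all $-\mu$, and its off-diagonal entries are nonnegative. Hence $J_2^{T}$ is strictly diagonally dominant with negative diagonal, with column-sum margin $\mu>0$. By Gershgorin's theorem applied to columns, every eigenvalue satisfies $\mathrm{Re}\,\lambda\le-\mu<0$. So $J_2$ contributes only stable eigenvalues, regardless of $\mathcal{R}_c$.

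\textbf{Step 3: the infected block $J_1$, the key step.} Set $a=\alpha+\eta_2+\mu>0$ and $b=c_2+\mu>0$. Then
\[
\mathrm{tr}\,J_1=-(a+b)<0,\qquad \det J_1=ab-\alpha\beta S^0=ab\,(1-\mathcal{R}_c^2),
\]
using the formula for $\mathcal{R}_c$ derived above. By the Routh--Hurwitz criterion for $2\times 2$ matrices, both eigenvalues have negative real part if and only if $\det J_1>0$, that is, if and only if $\mathcal{R}_c<1$. If $\mathcal{R}_c>1$, then $\det J_1<0$, so $J_1$ has one positive real eigenvalue and $\mathcal{E}^0$ is unstable; in fact it is a saddle.

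\textbf{Step 4: conclusion.} Combining Steps 2 and 3 with the linearization principle (Hartman--Grobman) gives local asymptotic stability of $\mathcal{E}^0$ when $\mathcal{R}_c<1$ and instability when $\mathcal{R}_c>1$.

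The main point needing care is Step 2. One must confirm that the relapse terms $\sigma_1 R$ and $\sigma_2 V$ cannot destabilize the uninfected subsystem. The column-sum observation settles this cleanly, so the argument has no genuine obstacle beyond bookkeeping.
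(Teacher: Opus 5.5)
Your proof is correct, and its skeleton matches the paper's. The paper factors the characteristic polynomial as $(\lambda+\mu)(\lambda^2+L_1\lambda+L_2)(\lambda^2+L_3\lambda+L_4)$, which is exactly your block splitting: $\lambda^2+L_1\lambda+L_2$ is the characteristic polynomial of $J_1$, and $L_2=\det J_1=(c_2+\mu)(\alpha+\eta_2+\mu)(1-\mathcal{R}_c^2)$, so your Step~3 is the paper's key step.

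The real difference is the uninfected block. The paper asserts, without derivation, that its characteristic polynomial is $(\lambda+\mu)(\lambda^2+L_3\lambda+L_4)$ with $L_3,L_4>0$, and reads off the roots. You instead bound the spectrum of $J_2$ with a column Gershgorin argument. Your route needs no algebra and explains the structure: the common column sum $-\mu$ means $(1,1,1)$ is a left eigenvector of $J_2$ with eigenvalue $-\mu$. That is the paper's factor $\lambda+\mu$, and it is just the linearization of $\dot N=\Lambda-\mu N$. The paper's factorization, for its part, gives the eigenvalues in closed form.

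Your trace--determinant test for $J_1$ also avoids the question of whether $\lambda_2,\lambda_3$ are real. The paper settles that by appealing to Descartes' rule of signs, which does not address complex roots. The claim is nonetheless true, because the discriminant is $(\alpha+\eta_2-c_2)^2+4\alpha\beta S^0\ge 0$.

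One cosmetic slip: with the ordering $(E,I,S,R,V)$ the zero block is the upper-right one, so $J(\mathcal{E}^0)$ is block \emph{lower} triangular. The spectrum is still the union of those of $J_1$ and $J_2$, so nothing changes.
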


\begin{proof}
    To examine the local asymptotic stability of the MFE, we derive the Jacobian matrix of the whole system \eqref{model1} around the MFE $\mathcal{E}^0$ given by:

\begin{center}
    $J =
\begin{pmatrix}
-(\eta_1+c_1+\mu) & 0 & -\beta S^0 & \sigma_1 & \sigma_2\\[4pt]
0 & -(\alpha+\eta_2+\mu) & \beta S^0 & 0 & 0\\[4pt]
0 & \alpha & -(c_2+\mu) & 0 & 0\\[4pt]
\eta_1 & \eta_2 & c_2 & -(\sigma_1+\mu) & 0\\[4pt]
c_1 & 0 & 0 & 0 & -(\sigma_2+\mu)
\end{pmatrix}.$
\end{center}
The characteristic equation for $J$  is given by:

\begin{eqnarray}\label{char_eqn_1}
(\lambda+\mu)\,
\big(\lambda^2 + L_1\lambda + L_2\big)\,
\big(\lambda^2 + L_3\lambda + L_4\big)=0,
\end{eqnarray}

where,
\begin{subequations}
\begin{eqnarray*}
   L_1 &=& \alpha + c_2 + \eta_2 + 2\mu\; >0,\\
L_2 &=& (c_2+\mu)(\alpha+\eta_2+\mu) - \alpha\beta S^0= (c_2+\mu)(\alpha+\eta_2+\mu) (1-\mathcal{R}_c^2),\\
L_3 &=& c_1 + \eta_1 + \sigma_1 + \sigma_2 + 2\mu\; >0,\\
L_4 &=& c_1\mu + c_1\sigma_1 + \eta_1\mu + \eta_1\sigma_2 + \mu^2 + \mu\sigma_1 + \mu\sigma_2 + \sigma_1\sigma_2\; >0.
\end{eqnarray*}
\end{subequations}
The eigen values of $J$ can be determined by the solution of the equation \eqref{char_eqn_1} and are given as follows:
$$\lambda_1=-\mu <0,\;\; \lambda_2=\frac{-L_1 + \sqrt{L_1^2 -4L_2}}{2},\;\;\lambda_3=\frac{-L_1 - \sqrt{L_1^2 -4L_2}}{2},$$
$$\lambda_4=\frac{-L_3 + \sqrt{L_3^2 -4L_4}}{2},\;\text{and}\;\lambda_5=\frac{-L_3 - \sqrt{L_3^2 -4L_4}}{2}.$$
It is obvious that $\lambda_1$, $\lambda_4$, and $\lambda_5$ are always negative, and the signs of $\lambda_2$ and $\lambda_3$ are determined by $\mathcal{R}_c$. Note that, by Descartes’ rule of signs, $\lambda_2$ and $\lambda_3$ cannot be complex conjugates. Suppose $\mathcal{R}_c < 1$. Then $L_2 > 0$, and consequently $\lambda_2 < 0$ and $\lambda_3 < 0$. If $\mathcal{R}_c > 1$, then $L_2 < 0$, and consequently $\lambda_2 > 0$ and $\lambda_3 < 0$. This completes the proof.
\end{proof}

\begin{theorem}\label{thm:DFE_GAS}

If \(\mathcal{R}_0<1\), then the MFE $\mathcal{E}^0$ is globally asymptotically stable in the feasible region
\(\{(S,E,I,R,V)\in\mathbb{R}_{\ge0}^5\}\). 
\end{theorem}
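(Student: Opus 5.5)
We read $\mathcal{R}_0$ in the statement as the threshold $\mathcal{R}_c$ derived above. The plan is to split the dynamics into the infected block $(E,I)$ and the uninfected block $(S,R,V)$. We first show that $(E,I)\to(0,0)$ using a linear Lyapunov function combined with a comparison argument. We then show that $(S,R,V)\to(S^0,R^0,V^0)$ using the theory of asymptotically autonomous (limiting) linear systems. Local asymptotic stability is already supplied by the previous theorem, so only global attractivity needs to be established.

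For the infected block we take the Lyapunov function
\[
L=\alpha E+(\alpha+\eta_2+\mu)I .
\]
Along solutions of \eqref{model1} this gives
\[
L'=\alpha\beta S I-(\alpha+\eta_2+\mu)(c_2+\mu)I=(\alpha+\eta_2+\mu)(c_2+\mu)\,I\left(\mathcal{R}_c^2\,\frac{S}{S^0}-1\right).
\]
Hence $L'\le0$ whenever $S\le S^0$, and $L'=0$ forces either $I=0$ or $S=S^0$ together with $\mathcal{R}_c=1$. The largest invariant set contained in $\{L'=0\}$ then has $I\equiv0$, and from $I'=\alpha E$ also $E\equiv0$. LaSalle's invariance principle would then give $(E,I)\to(0,0)$.

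The main obstacle is that we do not know a priori that $S(t)\le S^0$. A relaxed version is enough: $\limsup_{t\to\infty}S(t)\le S^0$. Because $\mathcal{R}_c<1$, we may choose $\varepsilon>0$ with $\mathcal{R}_c^2(S^0+\varepsilon)/S^0<1$. For large $t$ this yields $L'\le-\kappa I$ with some $\kappa>0$. Integrating gives $I\in L^1(0,\infty)$, and since $I'$ is bounded (all variables are bounded by $\widetilde N$), Barbalat's lemma gives $I\to0$. The linear equation for $E$ then gives $E\to0$. To obtain the $\limsup$ bound we first use $N\to\Lambda/\mu$ to replace $R+V$ by $\Lambda/\mu-S-E-I+o(1)$. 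We then write the $(S,R,V)$ equations as a cooperative linear system driven by the nonnegative inputs $\eta_2E+c_2I$ (into $R$) and $-\beta SI$ (into $S$). Comparison with the same system without the input $-\beta SI$ bounds $S$ by the solution of a linear system whose equilibrium is $S^0$ plus a term of order $\limsup(E+I)$.

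We then close the argument with a fluctuation-lemma estimate: take $\limsup$ in the $E$ and $I$ equations to obtain
\[
\bar I\le\frac{\alpha\beta\,\bar S\,\bar I}{(c_2+\mu)(\alpha+\eta_2+\mu)},
\]
where bars denote $\limsup$. Combining this with $\bar S\le S^0+C(\bar E+\bar I)$ should force $\bar I=0$ when $\mathcal{R}_c<1$. This circular estimate is the delicate step. If it does not close in general, because the recycling $S\to E\to R\to S$ bypasses the vaccination loop, we would state the sufficient condition that makes the comparison work and note that it holds for the parameters in Table~\ref{tab:parameters}.

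Once $(E,I)\to(0,0)$, the uninfected block is asymptotically autonomous. Its limiting system
\[
S'=\Lambda-(\eta_1+c_1+\mu)S+\sigma_1R+\sigma_2V,\quad R'=\eta_1S-(\sigma_1+\mu)R,\quad V'=c_1S-(\sigma_2+\mu)V
\]
is linear with Jacobian eigenvalues $-\mu$ and the roots of $\lambda^2+L_3\lambda+L_4$, all of which have negative real part. Its unique equilibrium $(S^0,R^0,V^0)$ is therefore globally asymptotically stable. By Thieme's theory of asymptotically autonomous systems (or directly by variation of constants with the decaying forcing terms), every solution converges to $\mathcal{E}^0$. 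Together with the local stability from the previous theorem, this gives global asymptotic stability on $\mathbb{R}_{\ge0}^5$.
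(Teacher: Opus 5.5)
\textbf{There is a genuine gap: the bound $\limsup_{t\to\infty}S(t)\le S^0$ is never proved.} Both versions of your infected-block argument (LaSalle, and $L'\le-\kappa I$ with Barbalat) rest on this bound, and the route you sketch cannot deliver it. Carrying $R$ as a state driven by $\eta_2E+c_2I$ gives $\bar S\le S^0+C(\bar E+\bar I)$. Inserting this into $\bar I\le\mathcal{R}_c^2(\bar S/S^0)\bar I$ yields, when $\bar I>0$, only $1\le\mathcal{R}_c^2\bigl(1+C(\bar E+\bar I)/S^0\bigr)$, which is no contradiction.

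This is not a technicality. Infected devices re-enter $S$ through $R$ at rate $\sigma_1$ and bypass the $V$ reservoir. So when $\sigma_1>\sigma_2$ and $c_1>0$, a batch of infections really does push $S(t)$ above $S^0$ for a while. Any bound on $\bar S$ then has to use information about $(E,I)$, which is the circularity you hit.

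The paper's proof does not resolve this either. It dominates $(E,I)$ by the linear system with $\beta\widetilde N$ in place of $\beta S$. It then asserts that $\mathcal{R}_0<1$ and $S^0\le\widetilde N$ give $\alpha\beta\widetilde N<(\alpha+\eta_2+\mu)(c_2+\mu)$. That implication runs the wrong way, so what the paper actually proves is global stability under this stronger hypothesis. Its second half (Routh--Hurwitz for $B$ plus the Castillo-Chavez theorem) is the same as your asymptotically autonomous step.

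Your fallback claim is also false for a condition of that type. With the values in Table~\ref{tab:parameters}, $\alpha\beta\Lambda/\mu\approx 0.573$, whereas $(\alpha+\eta_2+\mu)(c_2+\mu)\le 0.355$ for every $c_2\in[0,1]$.

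\textbf{A repair when $\sigma_2\ge\sigma_1$.} This case covers the table, where $\sigma_1=\sigma_2$. Use $N\to\Lambda/\mu$ to eliminate $R$ rather than carry it. Since $R=N-S-V-E-I$ and the discarded terms $-\beta SI-\sigma_1(E+I)$ are nonpositive,
\[
S'\le \Lambda+\sigma_1 N-(\eta_1+c_1+\mu+\sigma_1)S+(\sigma_2-\sigma_1)V,\qquad V'=c_1S-(\sigma_2+\mu)V .
\]
If $\sigma_2\ge\sigma_1$, this comparison system has the following properties:
\begin{itemize}
\item It is cooperative.
\item Its matrix is Hurwitz, since the determinant is $(\eta_1+\sigma_1+\mu)(\sigma_2+\mu)+c_1(\sigma_1+\mu)>0$.
\item With $N=\Lambda/\mu$, its equilibrium is exactly $(S^0,V^0)$. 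This is just the $S$-equation at $\mathcal{E}^0$ with $R^0=\Lambda/\mu-S^0-V^0$.
\end{itemize}
Hence $\limsup S\le S^0$, your $\varepsilon$-argument gives $(E,I)\to(0,0)$, and the remainder of your proof goes through.

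If $\sigma_1>\sigma_2$, the coefficient of $V$ is negative and the comparison is no longer cooperative. The cruder estimate $\sigma_1R+\sigma_2V\le\sigma_1(N-S)$ only gives $\bar S\le \hat S:=\Lambda(\sigma_1+\mu)/\bigl(\mu(\eta_1+c_1+\sigma_1+\mu)\bigr)$, which exceeds $S^0$ when $c_1>0$. This proves global stability under $\mathcal{R}_c^2\hat S<S^0$, a condition stronger than $\mathcal{R}_c<1$, but not the theorem as stated.
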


\begin{proof}
To prove the global stability we follow the approach as described in \cite{castillo2002computation}. Consider the infected subsystem for \((E,I)\):
\begin{align}
\dot E &= \beta S(t) I -(\alpha+\eta_2+\mu)E,\label{eq:E}\\
\dot I &= \alpha E - (c_2+\mu)I.\label{eq:I}
\end{align}
Since \(S(t)\le N(t)\le \widetilde{N}\) for all \(t\), we can compare
\((E,I)\) with the following linear comparison system
\[
\frac{d}{dt}\begin{pmatrix}\widehat{E}\\[4pt]\widehat{I}\end{pmatrix}
= A \begin{pmatrix}\widehat{E}\\[4pt]\widehat{I}\end{pmatrix},\qquad
A=\begin{pmatrix}
-(\alpha+\eta_2+\mu) & \beta \widetilde{N}\\[4pt]
\alpha & -(c_2+\mu)
\end{pmatrix},
\]
with the same nonnegative initial data. The system is cooperative, so by
the comparison principle \(0\le (E(t),I(t))\le(\widehat{E}(t),\widehat{I}(t))\)
for all \(t\ge0\), where \((\widehat{E},\widehat{I})\) solves the linear
system above.

The linear matrix \(A\) has trace
\(\operatorname{tr}(A)=-(\alpha+\eta_2+\mu)-(c_2+\mu)<0\) and determinant
\[
\det(A)=(\alpha+\eta_2+\mu)(c_2+\mu)-\alpha\beta \widetilde{N}.
\]
Observing that \(S^0\le \widetilde{N}\) and using the definition of
\(\mathcal{R}_0\) we have, 
\[
\mathcal{R}_0<1
\quad\Longrightarrow\quad
\frac{\beta\alpha \widetilde{N}}{(\alpha+\eta_2+\mu)(c_2+\mu)}<1,
\]
hence \(\det(A)>0\). Therefore both eigenvalues of \(A\) are real and
negative, and the linear system is exponentially stable, i.e., there exist
constants \(C,\lambda>0\) such that
\(\|(\widehat{E}(t),\widehat{I}(t))\|\le C e^{-\lambda t}\|(\widehat{E}(0),\widehat{I}(0))\|\).
By comparison it follows that \(E(t)\to0\) and \(I(t)\to0\) exponentially
as \(t\to\infty\).

Now, consider the subsystem \((S,R,V)\) putting $(E, I)=(0, 0)$. From system
\eqref{model1}, we have
\[
\begin{pmatrix}
\dot S\\[4pt]
\dot R\\[4pt]
\dot V
\end{pmatrix}
=
\begin{pmatrix}
-(\eta_1+c_1+\mu) & \sigma_1 & \sigma_2\\[4pt]
\eta_1 & -(\sigma_1+\mu) & 0\\[4pt]
c_1 & 0 & -(\sigma_2+\mu)
\end{pmatrix}
\begin{pmatrix}
S\\[4pt]
R\\[4pt]
V
\end{pmatrix}
+
\begin{pmatrix}
\Lambda\\[4pt]
0\\[4pt]
0
\end{pmatrix}.
\]
Consider
$$B=\begin{pmatrix}
-(\eta_1+c_1+\mu) & \sigma_1 & \sigma_2\\[4pt]
\eta_1 & -(\sigma_1+\mu) & 0\\[4pt]
c_1 & 0 & -(\sigma_2+\mu).
\end{pmatrix}$$

The characteristic equation for the matrix $B$ is given by:

$$\lambda^3+L_1 \lambda^2 +L_2 \lambda +L_3=0,$$

where,
$$L_1=a+b+d,$$
$$L_2=ab+ad+bd-\sigma_1 \eta_1 -\sigma_2 c_1,$$
$$L_3= abd-\sigma_1 \eta_1 d - \sigma_2 c_1 b,$$
$$a=\eta_1 +c_1 +\mu, b= \sigma_1 +\mu, d= \sigma_2 +\mu.$$

Clearly, $L_1>0$. 
Now,
\begin{eqnarray*}
    L_2&=& \left[ (\eta_1 +c_1 +\mu)(\sigma_1 +\mu) -\sigma_1 \eta_1\right] + bd + \left[ (\eta_1 +c_1 +\mu)(\sigma_2 +\mu) \right]\\
    &=& \left[ \eta_1 \mu +(c_1 +\mu)(\sigma_1 +\mu)\right] + bd + \left[ (\eta_1 +\mu)(\sigma_2 +\mu) +c_1 \mu \right] >0.
\end{eqnarray*}
Next,
\begin{eqnarray*}
    L_3&=& abd-\sigma_1 \eta_1 d - \sigma_2 c_1 b = d(ab-\sigma_1 \eta_1)-\sigma_2 c_1 b\\
    &=& d \left[ (\eta_1 +c_1 +\mu)(\sigma_1 +\mu) -\sigma_1 \eta_1 \right]-\sigma_2 c_1 b\\
    &=& d \left[ \eta_1 \mu  +(c_1 +\mu)b  \right]-\sigma_2 c_1 b\\
    &=& d \eta_1 \mu +b \left[ (c_1 +\mu)(\sigma_2 +\mu) -\sigma_2 c_1\right]\\
    &=& d \eta_1 \mu +b \left[ c_1 \mu  +\mu(\sigma_2 +\mu)\right] >0.
\end{eqnarray*}

After some algebraic simplifications, $L_1 L_2-L_3$ can be written as follows: 

\begin{eqnarray*}
    L_1L_2-L_3&=& a^2 b +ab^2 +a^2d +ad^2+ b^2 d +b d^2 + 2abd -\sigma_1 \eta_1 (a+b) -\sigma_2 c_1 (a+d)\\
    &=& (a+b) \left[ ab -\sigma_1 \eta_1\right] +(a+d) \left[ ad -\sigma_2 c_1\right]+ b^2 d +b d^2 + 2abd\\
    &=& (a+b) \left[ (\eta_1 +c_1 +\mu)(\sigma_1 +\mu) -\sigma_1 \eta_1\right] \\
    &&+(a+d) \left[ (\eta_1 +c_1 +\mu)(\sigma_2 +\mu) -\sigma_2 c_1\right]+ b^2 d +b d^2 + 2abd\\
    &=& (a+b) \left[ \eta_1 \mu  +(c_1 +\mu)(\sigma_1 +\mu)\right] \\
    &&+ (a+d) \left[ (\eta_1 +\mu)(\sigma_2 +\mu) +c_1 \mu \right]+ b^2 d +b d^2 + 2abd >0.
\end{eqnarray*}
Thus by the Routh-Hurwitz conditions, the matrix $B$ has all eigen values with negative real parts. Consequently, $(S^0, R^0, V^0)$ is globally asymptotically stable for the $(S, R, V)$ sub-system as defined above. Thus, using the theorem discussed in \cite{castillo2002computation}, we can conclude that the DFE $\mathcal{E}^0$ is globally asymptotically stable for the system \eqref{model1} if $\mathcal{R}_0<1$. This completes the proof.

\end{proof}

\subsection{Existence of endemic equilibrium}
We have the following theorem for the existence of endemic equilibrium.

\begin{theorem}
    The system \eqref{model1} has unique endemic equilibrium point \((S^e,E^e,I^e,R^e, V^e)\) if and only if $\mathcal{R}_0 >1$.
\end{theorem}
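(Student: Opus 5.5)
We look for equilibria with $I^e>0$ and reduce the system \eqref{equilibrium_equn1} to a single scalar equation in $S$. Write $k_1=\alpha+\eta_2+\mu$ and $k_2=c_2+\mu$. From \eqref{equilibrium_equn1c}, $E=k_2 I/\alpha$. Substituting into \eqref{equilibrium_equn1b} gives $I(\beta S-k_1k_2/\alpha)=0$. With $I\neq0$ this forces
\[
S^e=\frac{k_1k_2}{\alpha\beta}=\frac{S^0}{\mathcal{R}_c^2},
\]
using the expression for $\mathcal{R}_c$ derived above; we read $\mathcal{R}_0$ in the statement as this threshold $\mathcal{R}_c$. So $S^e$ is determined uniquely, and what remains is to show that the other components are uniquely determined and positive exactly when $\mathcal{R}_c>1$.

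Next we express everything through $I$. Equation \eqref{equilibrium_equn1e} gives $V^e=c_1S^e/(\sigma_2+\mu)$. Equation \eqref{equilibrium_equn1d} gives
\[
R^e=\frac{\eta_1S^e+(\eta_2k_2/\alpha+c_2)I}{\sigma_1+\mu}.
\]
Rather than substituting into \eqref{equilibrium_equn1a} directly, it is cleaner to use the total population. Summing all equations gives $N^e=\Lambda/\mu$, so
\[
S^e+\frac{k_2}{\alpha}I+I+R^e+V^e=\frac{\Lambda}{\mu}.
\]
This is linear in $I$ with positive coefficient $\frac{k_2}{\alpha}+1+\frac{\eta_2k_2/\alpha+c_2}{\sigma_1+\mu}$, so it has a unique solution $I^e$. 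We must check that the total-population equation can replace \eqref{equilibrium_equn1a}: the five equations sum to $\Lambda-\mu N=0$, so given the other four, \eqref{equilibrium_equn1a} is equivalent to the sum equation. Uniqueness of the endemic equilibrium follows.

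The main step is the sign of $I^e$. We have $I^e>0$ if and only if
\[
S^e\Big(1+\frac{\eta_1}{\sigma_1+\mu}+\frac{c_1}{\sigma_2+\mu}\Big)<\frac{\Lambda}{\mu}.
\]
Apply the same computation to the malware-free equilibrium, i.e. set $I=0$ in the sum equation with $S^0$ in place of $S^e$. This gives $S^0\big(1+\frac{\eta_1}{\sigma_1+\mu}+\frac{c_1}{\sigma_2+\mu}\big)=\Lambda/\mu$. This identity can be checked directly against the closed form of $S^0$: multiplying the simplified denominator $\frac{\eta_1\mu}{\sigma_1+\mu}+c_1+\frac{c_1\mu}{\sigma_2+\mu}$ by $1/\mu$ should reproduce the bracket, and it does once $\mu$ is added. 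Hence $I^e>0$ iff $S^e<S^0$ iff $\mathcal{R}_c^2>1$.

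Positivity of $E^e$ and $R^e$ then follows from $I^e>0$, and $V^e>0$ is immediate. Conversely, if $\mathcal{R}_c\le1$, then $I^e\le0$, so no equilibrium with positive infection exists. The only delicate bookkeeping is the identity for $S^0$ and the reduction to the sum equation; the rest is linear algebra.
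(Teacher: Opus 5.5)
Your proof is correct, and its skeleton matches the paper's. Both arguments use the $I$-, $E$-, $V$- and $R$-equations to get $E^e=\frac{c_2+\mu}{\alpha}I^e$, the forced value $S^e=\frac{(\alpha+\eta_2+\mu)(c_2+\mu)}{\alpha\beta}=S^0/\mathcal{R}_c^2$, and $V^e,R^e$ affine in $I^e$, and then reduce everything to one linear equation in $I^e$. The difference lies in the equation you use to close the system.

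The paper substitutes into the $S$-equation and obtains $A_0-A_1I^e=0$ with $A_0=D(S^0-S^e)$. There $A_1=\beta S^e-\frac{\sigma_1c_2}{\sigma_1+\mu}-\frac{\sigma_1\eta_2(c_2+\mu)}{\alpha(\sigma_1+\mu)}$ is a difference of terms, so the paper needs an explicit cancellation to show $\alpha(\sigma_1+\mu)A_1=\mu\bigl(\sigma_1(\alpha+c_2+\mu)+(\alpha+\eta_2+\mu)(c_2+\mu)\bigr)>0$.

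You instead replace the $S$-equation by $N^e=\Lambda/\mu$, which is legitimate because the five right-hand sides sum to $\Lambda-\mu N$. The coefficient of $I^e$ then becomes a sum of positive terms, so its sign is visible by inspection. You handle the constant term through the MFE's own conservation identity rather than through $\Lambda=DS^0$. The two scalar equations are in fact identical up to the factor $\mu$: once the other four equations hold, the left side of the $S$-equation equals $\Lambda-\mu N$. This also explains the overall factor $\mu$ in the paper's $A_1$. Your route is shorter and less error-prone; the paper's is pure substitution with no structural observation needed, at the cost of that computation.

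One small correction. The paper's simplified denominator of $S^0$ contains a misprint: the standalone $c_1$ should be $\mu$, giving $\frac{\eta_1\mu}{\sigma_1+\mu}+\mu+\frac{c_1\mu}{\sigma_2+\mu}=\mu\bigl(1+\frac{\eta_1}{\sigma_1+\mu}+\frac{c_1}{\sigma_2+\mu}\bigr)$. So the fix is to replace $c_1$ by $\mu$, not to add $\mu$, since adding $\mu$ would leave an extra $c_1/\mu$ term. This does not affect your argument, because you also obtain the identity from the fact that $\mathcal{E}^0$ satisfies the summed equation. Similarly, $V^e>0$ requires $c_1>0$; for $c_1=0$ you only get $V^e=0$, which is harmless.
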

\begin{proof}

Let us denote the endemic equilibrium (EE) point by \((S^e,R^e,E^e,I^e,V^e)\). 

From the equation of \(I\) in the system \eqref{model1} at EE
\[
0=\alpha E - (c_2+\mu)I \quad\Longrightarrow\quad
E^e=\frac{c_2+\mu}{\alpha}\,I^e.
\]

Using the above equation and the equation of \(E\) in the system \eqref{model1} at EE

\[
0=\beta S -(\alpha+\eta_2+\mu)\frac{c_2+\mu}{\alpha}
\quad\Longrightarrow\quad
\,S^e=\dfrac{(\alpha+\eta_2+\mu)(c_2+\mu)}{\beta\alpha}\,.
\]

From the equation of \(V\) in the system \eqref{model1} at EE
\[
0=c_1 S -(\sigma_2+\mu)V \quad\Longrightarrow\quad
\,V^e=\dfrac{c_1}{\sigma_2+\mu}S^e.
\]

From the equation of \(R\) in the system \eqref{model1} at EE
\[
0=\eta_1 S +\eta_2 E + c_2 I -(\sigma_1+\mu)R
\quad\Longrightarrow\quad
\,R^e=\dfrac{\eta_1 S^e +\eta_2 E^e + c_2 I^e}{\sigma_1+\mu}\,.
\]

Finally substitute the above relations into the \(S\) equation at EE, we get

\[
\begin{aligned}
0 &= \Lambda -\beta S^e I^e -(\eta_1+c_1+\mu)S^e
+ \sigma_1\frac{\eta_1 S^e +\eta_2 E^e + c_2 I^e}{\sigma_1+\mu}
+ \sigma_2\frac{c_1}{\sigma_2+\mu}S^e \\
&= A_0
\;-\;A_1\,I^e,
\end{aligned}
\]
where,
$$A_0=\Lambda -(\eta_1+c_1+\mu)S^e
+ \sigma_1\frac{\eta_1 S^e}{\sigma_1+\mu}
+ \sigma_2\frac{c_1 S^e}{\sigma_2+\mu},$$
$$A_1=\Big(\beta S^e -\frac{\sigma_1 c_2}{\sigma_1+\mu}-\frac{\sigma_1 \eta_2}{\sigma_1+\mu} \frac{c_2+\mu}{\alpha}\Big).$$
Thus
\[
I^e=\frac{A_0}{A_1}=\frac{\Lambda -(\eta_1+c_1+\mu)S^e
+ \sigma_1\frac{\eta_1 S^e}{\sigma_1+\mu} + \sigma_2\frac{c_1 S^e}{\sigma_2+\mu}}
{\beta S^e -\frac{\sigma_1 c_2}{\sigma_1+\mu}-\frac{\sigma_1 \eta_2}{\sigma_1+\mu} \frac{c_2+\mu}{\alpha}}.
\]
Note that $S^0=\Lambda/D,$ where, $D=(\eta_1+c_1+\mu)
- \sigma_1\frac{\eta_1}{\sigma_1+\mu} -\sigma_2\frac{c_1}{\sigma_2+\mu}.$

Now, $$A_0=\Lambda -S^eD=D(S^0-S^e).$$
Moreover note that $S^e=\frac{S^0}{\mathcal{R}_0^2}.$

This implies $$A_0=\frac{D S^0}{\mathcal{R}_0^2} (\mathcal{R}_0^2-1).$$

Thus, $A_0>0$ if and only if $\mathcal{R}_0>1.$

Moreover, $$\alpha (\sigma_1+\mu)A_1=\mu \big( \sigma_1 (\alpha +c_2+\mu) + (\alpha+\eta_2 +\mu) (c_2 +\mu) \big),$$
which ensures that $A_1>0$. 
Thus an endemic equilibrium exists uniquely if and only if
\(\mathcal{R}_0>1\). 
\end{proof}
\subsection{Local stability of the endemic equilibrium}
To study the local asymptotical stability of the endemic equilibrium point \((S^e,E^e,I^e,R^e, V^e)\), we calculate the Jacobian matrix of the model \eqref{model1} around the endemic equilibrium and is give as follows:

\begin{center}
    $J =
\begin{pmatrix}
-(\omega_1+\beta I^e) & 0 & -\beta S^e & \sigma_1 & \sigma_2\\[4pt]
\beta I^e & -\omega_2 & \beta S^e & 0 & 0\\[4pt]
0 & \alpha & -\omega_3 & 0 & 0\\[4pt]
\eta_1 & \eta_2 & c_2 & -\omega_4 & 0\\[4pt]
c_1 & 0 & 0 & 0 & -\omega_5
\end{pmatrix},$
\end{center}
where,
\begin{eqnarray*}
    \omega_1&=& \eta_1+c_1+\mu,\;\;
    \omega_2= \alpha +\eta_2+\mu,\\
    \omega_3 &=& c_2 +\mu,\;\; \omega_4 = \sigma_1 +\mu,\;\; \omega_5 = \sigma_2+\mu.
\end{eqnarray*}
The characteristic equation for $J$  is given by:

\begin{eqnarray}\label{char_eqn_2}
\lambda^5 + H_1\lambda^4 + H_2 \lambda^3 + H_3\lambda^2 + H_4 \lambda +H_5=0,
\end{eqnarray}
where
\[
\begin{aligned}
H_1 &= \omega_5 + \omega_4 - \omega_3 + \omega_2 + \omega_1 + b I^{e}, \\[4pt]
H_2 &= -\bigl(
       c_1 c_2
     - \omega_5\omega_4 + \omega_5\omega_3 - \omega_5\omega_2 - \omega_5\omega_1 - \omega_5 b I^{e}
     + n_1 c_1
     + \omega_4\omega_3 - \omega_4\omega_2 - \omega_4\omega_1 - \omega_4 b I^{e} \\[-2pt]
    &\quad + a b S^{e}
     + \omega_3\omega_2 + \omega_3\omega_1 + \omega_3 b I^{e}
     - \omega_2\omega_1 - \omega_2 b I^{e}
     \bigr), \\[6pt]
H_3 &= -\bigl(
       c_1 c_2 \omega_4 - c_1 c_2 \omega_3 + c_1 c_2 \omega_2
     + \omega_5 n_1 c_1
     + \omega_5\omega_4\omega_3 - \omega_5\omega_4\omega_2 - \omega_5\omega_4\omega_1 - \omega_5\omega_4 b I^{e} \\
    &\quad + \omega_5 a b S^{e}
     + \omega_5\omega_3\omega_2 + \omega_5\omega_3\omega_1 + \omega_5\omega_3 b I^{e}
     - \omega_5\omega_2\omega_1 - \omega_5\omega_2 b I^{e} \\[-2pt]
    &\quad - n_1 c_1 \omega_3 + n_1 c_1 \omega_2
     + \omega_4 a b S^{e}
     + \omega_4\omega_3\omega_2 + \omega_4\omega_3\omega_1 + \omega_4\omega_3 b I^{e}
     - \omega_4\omega_2\omega_1 - \omega_4\omega_2 b I^{e} \\[-2pt]
    &\quad + a b S^{e}\,\omega_1 + \omega_3\omega_2\omega_1 + \omega_3\omega_2 b I^{e} + n_2 b I^{e} c_1
     \bigr), \\[6pt]
H_4 &= -\bigl(
      -c_1 c_2 a b S^{e} - c_1 c_2 \omega_3 \omega_2
      + \omega_5 n_2 b I^{e} c_1 - \omega_5 n_1 c_1 \omega_3 + \omega_5 n_1 c_1 \omega_2 \\
    &\quad + \omega_5 \omega_4 a b S^{e}
      + \omega_5 \omega_4 \omega_3 \omega_2 + \omega_5 \omega_4 \omega_3 \omega_1 + \omega_5 \omega_4 \omega_3 b I^{e}
      - \omega_5 \omega_4 \omega_2 \omega_1 - \omega_5 \omega_4 \omega_2 b I^{e} \\
    &\quad + \omega_5 a b S^{e}\,\omega_1 + \omega_5 \omega_3 \omega_2 \omega_1 + \omega_5 \omega_3 \omega_2 b I^{e}
      - c_1 c_2 \omega_4 \omega_3 + c_1 c_2 \omega_4 \omega_2 \\
    &\quad + c_2 a b I^{e} c_1 - n_2 b I^{e} c_1 \omega_3 - n_1 c_1 a b S^{e} - n_1 c_1 \omega_3 \omega_2
      + \omega_4 a b S^{e}\,\omega_1 + \omega_4 \omega_3 \omega_2 \omega_1 + \omega_4 \omega_3 \omega_2 b I^{e}
      \bigr), \\[6pt]
H_5 &= c_1 c_2 \omega_4 a b S^{e} + c_1 c_2 \omega_4 \omega_3 \omega_2
     - \omega_5 c_2 a b I^{e} c_1 + \omega_5 n_2 b I^{e} c_1 \omega_3 \\
    &\quad + \omega_5 n_1 c_1 a b S^{e} + \omega_5 n_1 c_1 \omega_3 \omega_2
     - \omega_5 \omega_4 a b S^{e}\,\omega_1 - \omega_5 \omega_4 \omega_3 \omega_2 \omega_1 - \omega_5 \omega_4 \omega_3 \omega_2 b I^{e}.
\end{aligned}
\]

All roots of the characteristic polynomial \eqref{char_eqn_2} have negative real parts if and only if the following Routh–Hurwitz conditions are satisfied:

\begin{enumerate}
\item $H_1 > 0,  H_2 > 0,  H_3 > 0,  H_4 > 0,  H_5 > 0,$

\item $H_1 H_2 - H_3 > 0,$
\item $H_1 H_4 - H_5 > 0,$
\item $(H_1 H_2 - H_3)H_3 - H_1(H_1 H_4 - H_5) > 0,$
\item $\big[\big( (H_1 H_2 - H_3)H_3 - H_1(H_1 H_4 - H_5) \big)(H_1 H_4 - H_5) - (H_1 H_2 - H_3)H_5 \big] > 0.$
\end{enumerate}

 Due to the complexity of the expressions appeared in the characteristic equation of the Jacobian corresponding to the endemic equilibrium point, further analytical investigation was not possible and we demonstrate the results with numerical simulation. Using the parameter setup as mentioned in Table \ref{tab:parameters} ad $c_1=0.05, c_2=0.05$, the characteristic equation \eqref{char_eqn_2} becomes:

\[
\lambda^5 + 14.704\, \lambda^4 + 5.192\, \lambda^3 + 0.170\, \lambda^2 + 0.00226\, \lambda + 5.02\times 10^{-7} = 0,
\]
and the eigen values are given by
\[
 \lambda = \{-14.295,\ -0.3546,\ -0.0504,\ -0.00457,\ -0.00457\}.
\]

This guarantees that the endemic equilibrium point is locally stable under the above parameter setup. Moreover, numerically we observe the existence of forward bifurcation in the model \eqref{model1} ( see Figure \ref{forward_bifurcation}). 

\begin{figure}[ht!]
\begin{center}
	\includegraphics[scale=0.6]{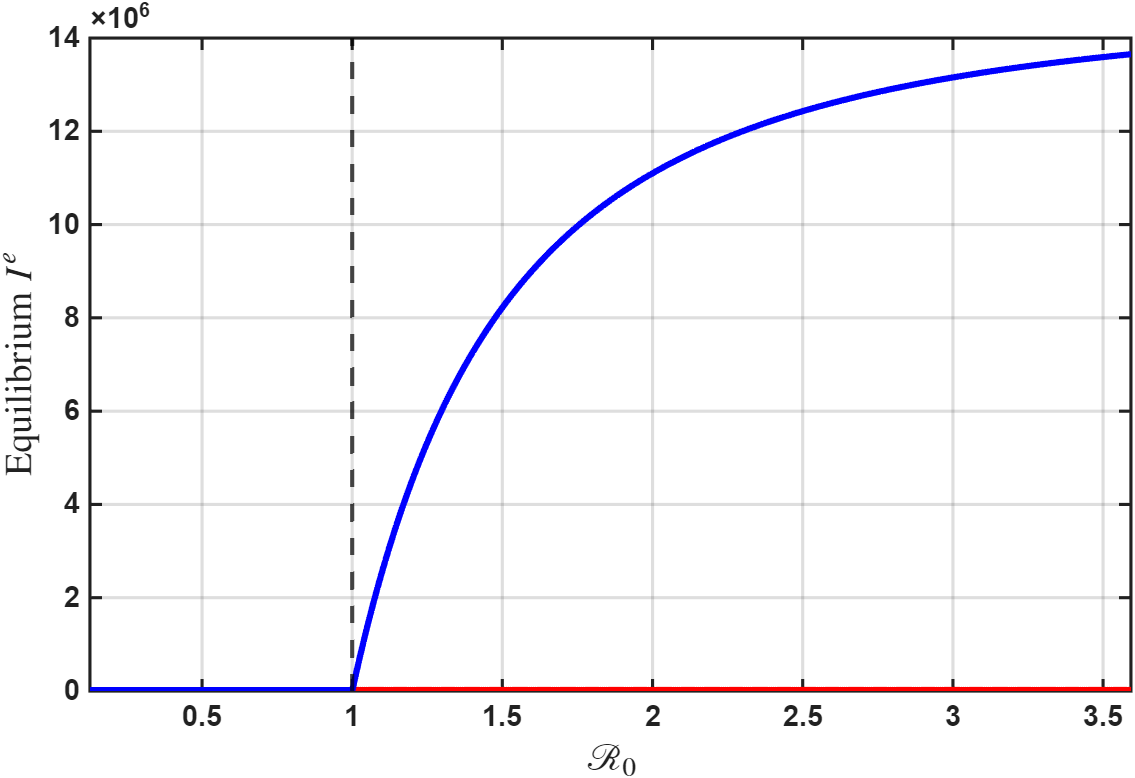}
\caption{Existence of forward bifurcation in the model \eqref{model1} under the parameter setup as mentioned in Table \eqref{tab:parameters}. Blue color corresponds to stable equilibrium point and red color corresponds to unstable equilibrium point.}
\label{forward_bifurcation}
\end{center}
\end{figure}

\color{black}
\subsection{Malware growth and decay regions in control space}
The malware transmission threshold $\mathcal{R}_c$ is an important threshold that can classify the malware growth or decay regions. Malware growth occurs if $\mathcal{R}_c>1$ and there is no malware growth if $\mathcal{R}_c <1$. Since we have two control parameters  $c_1$ and $c_2$, the $(c_1, c_2)$-parametric space can be divided into these regions. The separatrix between malware extinction and malware growth regions is characterized by $\mathcal{R}_c=1$, and the regions are defined respectively as follows:
$$\mathcal{S}_1 :=  \left\{ (c_1, c_2) \in [0, 1] \times [0, 1]: \beta S^0 \alpha < (c_2 + \mu) (\alpha +\eta_2 +\mu) \right\},$$
and
$$\mathcal{S}_2 :=  \left\{ (c_1, c_2) \in [0, 1] \times [0, 1]: \beta S^0 \alpha > (c_2 + \mu) (\alpha +\eta_2 +\mu) \right\}.$$
Figure \ref{growth_region_plot} shows the regions for different values of $\beta$. The dark color regions represent malware growth regions and light color regions represent the malware extinction regions. As the value of transmission rate $\beta$ increases, the area of the epidemic growth region increases in the $(c_1, c_2)$-parametric plane.

\begin{figure}[ht!]
\begin{center}
		\mbox{
  \subfigure[]{\includegraphics[width=0.45\textwidth,height=0.35\textwidth]{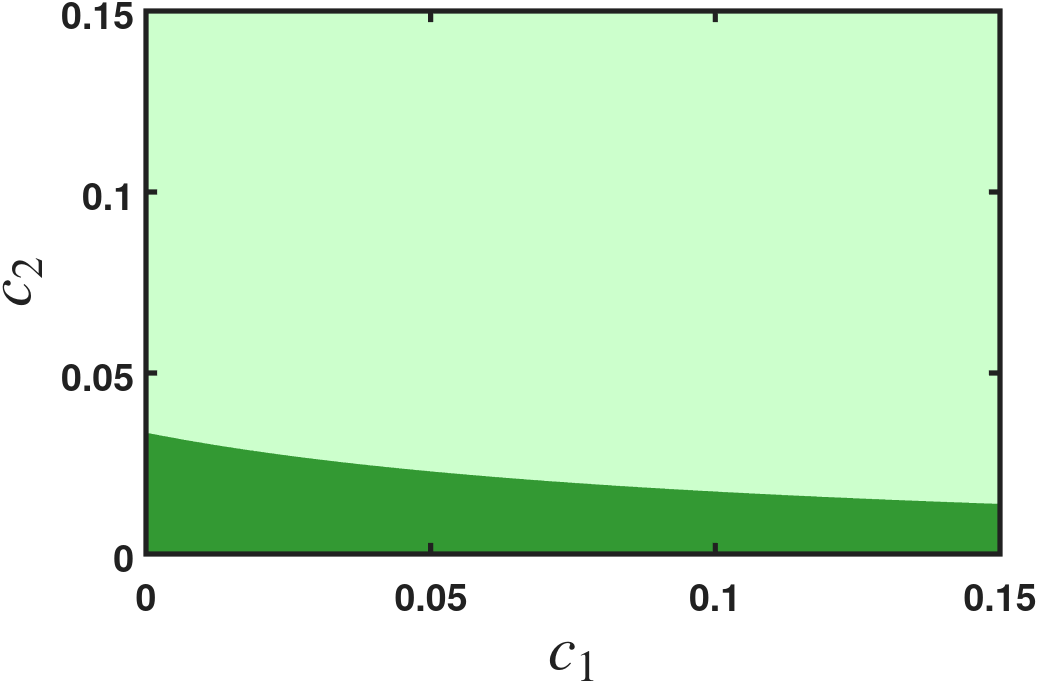}}
  \subfigure[]{\includegraphics[width=0.45\textwidth,height=0.35\textwidth]{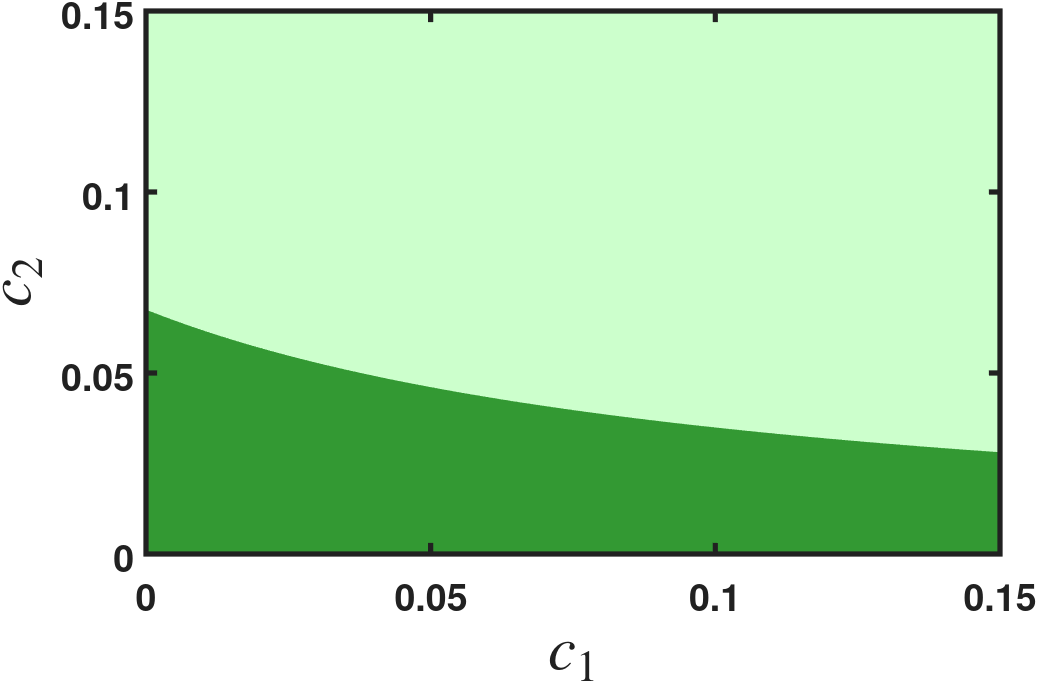}}}
  \mbox{
  \subfigure[]{\includegraphics[width=0.45\textwidth,height=0.35\textwidth]{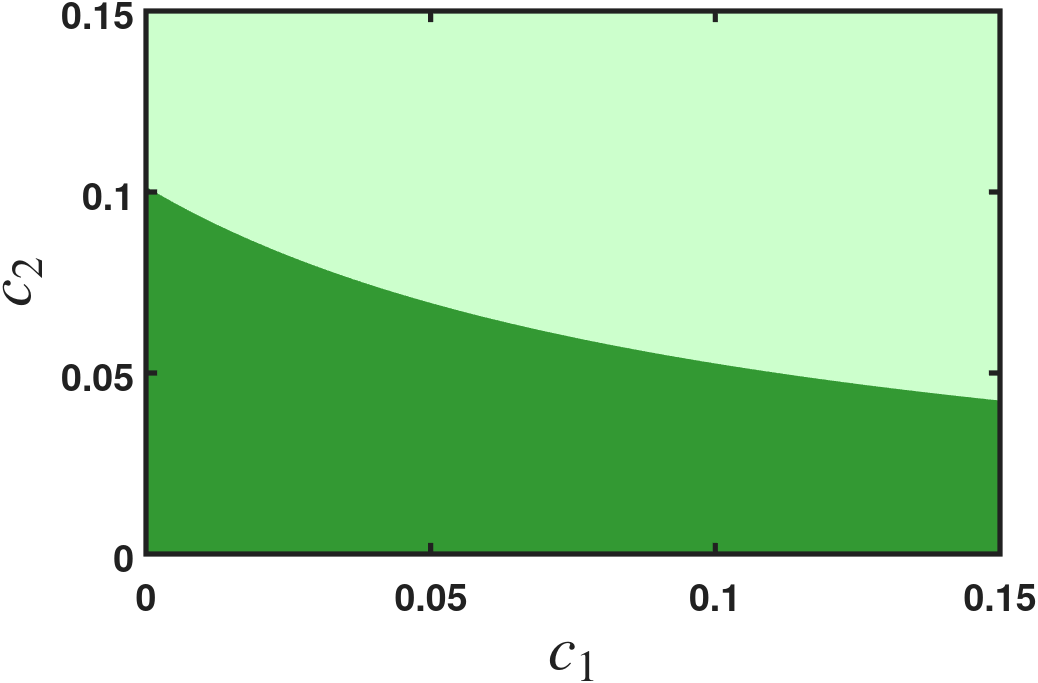}}
}

\caption{(a) $\beta=2 \times 10^{-9}$; (b) $\beta=4 \times 10^{-9}$; (c) $\beta=6 \times 10^{-9}$. The dark green region corresponds to the malware growth region $\mathcal{S}_2$, while the light green region corresponds to the malware extinction region $\mathcal{S}_1$. The rest of the parameters are chosen as mentioned in Table~\ref{tab:parameters}.}
\label{growth_region_plot}
\end{center}
\end{figure}

\section{Hybrid gradient-based global optimization using simulated annealing}\label{sec4}
In this work, we have considered two primary interventions as control strategies: vaccination of susceptible devices ($c_1$) and treatment of infected devices ($c_2$). Each of these measures has an implementation cost. The objective is to determine an optimal combination of these controls that minimizes the total cost, which includes both the infection-related costs and the costs related to control measures. To address this, we formulate the following optimal control problem:

\begin{equation}\label{objective_function}
\min_{0 \leq c_1, c_2 \leq 1} \frac{m_0}{\mathcal{T} \widetilde{N}} \int_0^{\mathcal{T}} I(t; c_1, c_2) dt + k_1 c_1 + k_2 c_2=\min_{0 \leq c_1, c_2 \leq 1} k_0 \int_0^{\mathcal{T}} I(t; c_1, c_2) dt + k_1 c_1 + k_2 c_2,
\end{equation}
where, $k_0=\frac{m_0}{\mathcal{T} \widetilde{N}}>0$, $k_1>0$ and $k_2>0$. $\mathcal{T}>0$ is the maximum time of consideration. Note that the quantity $\frac{1}{\mathcal{T} \widetilde{N}} \int_0^{\mathcal{T}} I(t; c_1, c_2) dt$ signifies the average proportion of the population infected per unit time over the entire duration $[0, \mathcal{T}]$, under the influence of control strategies $c_1$ and $c_2$. $m_0$, $k_1$ and $k_2$ are the associated costs related to the average infection, the vaccination of susceptible and the treatment of infected respectively. Here we derive a gradient-based optimization algorithm to minimize the cost function \cite{anicta2011introduction,barbu2012mathematical}. Let
$$\mathcal{J}(c_1, c_2)=k_0 \int_0^{\mathcal{T}} I(t; c_1, c_2) dt + k_1 c_1 + k_2 c_2.$$
The corresponding adjoint system is given as follows:

\begin{equation}\label{adjoint_eqn_1}
\frac{d \mathcal{H}}{dt}= -A^T \mathcal{H} + (0, 0, 1, 0, 0)^T, \; \mathcal{H}(\mathcal{T})=(0, 0, 0, 0, 0)^T,
\end{equation}
where,
$$A = \left(
\begin{array}{ccccc}
-\beta I -\eta_1 -c_1-\mu & 0 & -\beta S & \sigma_1 & \sigma_2 \\
\beta I  & -\alpha-\eta_2-\mu & \beta S & 0 & 0 \\
0 & \alpha & -c_2 -\mu & 0 & 0 \\
\eta_1 & \eta_2 & c_2 & -\sigma_1-\mu & 0 \\
c_1 & 0 & 0 & 0 & -\sigma_2-\mu
\end{array}
\right),$$
and $S=S(t; c_1, c_2)$, $I=I(t; c_1, c_2)$. Suppose $\mathcal{H}=(\mathcal{H}_1, \mathcal{H}_2, \mathcal{H}_3, \mathcal{H}_4, \mathcal{H}_5)^T$ is a solution of the adjoint system \eqref{adjoint_eqn_1}. 
Denote $C=(c_1, c_2)$. The directional derivative of $\mathcal{J}(C)$ can be defined as
$$d \mathcal{J}(C)(\theta)= \lim_{\epsilon \to 0} \frac{d}{d \epsilon} \mathcal{J}(C+\epsilon \theta),$$
where $\theta=(\theta_1, \theta_2) \in \mathbb{R}^2$, $\epsilon>0$ is an arbitrarily small number such that $C+\epsilon \theta \in [0, 1]^2$.
Now, 
$$ \frac{d}{d \epsilon} \mathcal{J}(C+\epsilon \theta)=k_0 \int_0^{\mathcal{T}} \frac{d I(t, C+ \epsilon \theta)}{d \epsilon} dt + k_1 \theta_1 +k_2 \theta_2.$$

Denote 
\begin{subequations}
    \begin{eqnarray*}
       U(t)&=&(U_1(t), U_2(t), U_3(t), U_4(t), U_5(t))\\
       &=&\lim_{\epsilon \to 0} \frac{d}{d \epsilon}\bigg( S(t, C+ \epsilon \theta), E(t, C+ \epsilon \theta),
       I(t, C+ \epsilon \theta),  R(t, C+ \epsilon \theta), V(t, C+ \epsilon \theta)\bigg). 
    \end{eqnarray*}
\end{subequations}
This implies,
\begin{subequations}\label{cond_J}
    \begin{eqnarray}
      d \mathcal{J}(C)(\theta)&=& k_0 \int_0^{\mathcal{T}} U_3(t) dt + k_1 \theta_1 +k_2 \theta_2.
    \end{eqnarray}
\end{subequations}
From system \eqref{model1} we have

\begin{subequations}\label{eqn_U}
    \begin{eqnarray}
   \frac{d U}{dt} &=& A U +B \theta,
    \end{eqnarray}
\end{subequations}
where, 

$$B = \left(
\begin{array}{cc}
-S & 0  \\
0 & 0  \\
0 & -I  \\
0 & I  \\
S & 0 
\end{array}
\right),\;\;\text{with}\; S=S(t; c_1, c_2) \;\;\text{and}\;\; I=I(t; c_1, c_2).$$
Multiplying both sides of \eqref{eqn_U} by $\mathcal{H}^T$ and integrating with respect to $t$ from $0$ to $\mathcal{T}$ and using the conditions $U(0)=\mathcal{H}(\mathcal{T})=(0, 0, 0, 0, 0)^T$, we obtain

\begin{subequations}\label{cond_1}
    \begin{eqnarray}
   \int_0^{\mathcal{T}} \mathcal{H}^T\frac{d U}{dt} dt &=& \int_0^{\mathcal{T}}\mathcal{H}^T A U dt+  \int_0^{\mathcal{T}}\mathcal{H}^T B \theta dt,\nonumber\\
  \Rightarrow \;\; -\int_0^{\mathcal{T}} \frac{d \mathcal{H}^T}{dt} U dt &=& \int_0^{\mathcal{T}}\mathcal{H}^T A U dt+  \int_0^{\mathcal{T}} \bigg( \theta_1 (\mathcal{H}_5(t; c_1, c_2)-\mathcal{H}_1(t; c_1, c_2)) S(t; c_1, c_2) \nonumber \\
   &&+ \theta_2 (\mathcal{H}_4(t; c_1, c_2)-\mathcal{H}_3(t; c_1, c_2)) I(t; c_1, c_2) \bigg) dt.
    \end{eqnarray}
\end{subequations}

Now taking transpose the adjoint equation \eqref{adjoint_eqn_1}, then multiplyting both sides by $U(t)$ and integrating with respect to $t$ from $0$ to $\mathcal{T}$ we obtain the following relation:

\begin{subequations}\label{adjoint_eqn_2}
    \begin{eqnarray}
   -\int_0^{\mathcal{T}} \frac{d \mathcal{H}^T}{dt} U dt &=& \int_0^{\mathcal{T}} \mathcal{H}^T A U dt -\int_0^{\mathcal{T}} U_3(t) dt.
    \end{eqnarray}
\end{subequations}

Consequently, from \eqref{cond_J}, \eqref{cond_1} and \eqref{adjoint_eqn_2}, the directional derivative of $\mathcal{J}(C)$ is given by:

\begin{subequations}\label{relation_dJ}
    \begin{eqnarray}
   d\mathcal{J}(C)(\theta) &=& \theta_1 \bigg( k_1 - k_0 \int_0^{\mathcal{T}} (Q_5(t; c_1, c_2) -Q_1(t; c_1, c_2) S(t; c_1, c_2) dt\bigg)\nonumber\\
   && + \theta_2 \bigg( k_2 - k_0 \int_0^{\mathcal{T}} (Q_4(t; c_1, c_2) -Q_3(t; c_1, c_2) I(t; c_1, c_2) dt\bigg).
    \end{eqnarray}
\end{subequations}

\begin{center}
\footnotesize
\begin{minipage}{0.95\linewidth}
\begin{algorithm}[H]
    \SetAlgoLined
    \SetKwInOut{Input}{Input}\SetKwInOut{Output}{Output}
    
    \Input{
        Gradient function $\nabla \mathcal{J}(C)$ related to cost, initial value of control $C^{(0)}=(c_1^{(0)}, c_2^{(0)}) \in [0, 1]^2$, \\
        Tolerance levels $\epsilon_k$, $\delta_k$, all the other model parameters in model \eqref{model1}\\
        Parameters related to Simulated Annealing: initial temperature $T$, cooling rate $\lambda$, \\
        number of cooling steps $N_c$, number of perturbations per step $N$
    }

    \Output{Global estimate of optimal control $C^*=(c_1^*, c_2^*)$}
    
    \nl Initialize $C \leftarrow C^{(0)}$, $k \leftarrow 0$

    \nl \Repeat{met the convergence criteria}{

        \nl \textbf{Gradient-based local search}:\\
        \Indp
        \nl Solve the system \eqref{model1} in forward-time to obtain $S(t; C)$, $E(t; C)$, $I(t; C)$, $R(t; C)$ and $V(t; C)$

        \nl Solve the adjoint system \eqref{adjoint_eqn_1} in backward-time to obtain $\mathcal{H}(t; C)$

        \nl Compute the gradient $\nabla \mathcal{J}(C)$ using \eqref{gradient1}

        \nl Perform a local descent step: 
        \[
        \widetilde{C}^{(i)} = C^{(i)} - \eta \cdot \nabla \mathcal{J}(C^{(i)})
        \]

        \nl Project $\widetilde{C}^{(i)}$ on $[0, 1]^2$: 
        \[
        C^{(k)*} = \Pi_{[0, 1]^2}(\widetilde{C}^{(i)})
        \]
        \Indm
        
        \nl \If{$\mathcal{J}(C^{(k)*}) - \mathcal{J}(C^{(k)}) < -\epsilon_k$}{
            $C \leftarrow C^{(k)*}$
        }

        \nl \textbf{Simulated Annealing Phase:} Set temperature $T$

        \nl \For{$j = 1$ \KwTo $N_c$}{

            \nl \For{$i = 1$ \KwTo $N$}{

                \nl Choose $m \in \{1, 2\}$ randomly

                \nl \eIf{$m = 1$}{
                    Re-generate randomly only one entry of $C$ to obtain a new point $\tilde{C}$
                }{
                    Re-generate randomly  both the entries of $C$ to obtain a new point $\tilde{C}$
                }

                \nl Project the randomly generated $\tilde{C}$ on $[0, 1]^2$

                \nl Compute $\Delta \leftarrow \mathcal{J}(\tilde{C}) - \mathcal{J}(C)$

                \nl \eIf{$\Delta < -\delta_k$}{
                    $C \leftarrow \tilde{C}$
                }{
                   Pick a number $r$ between $(0,1)$ randomly from a uniform distribution on $(0, 1)$\\
                    \If{$r < T \exp(-\Delta / T)$}{
                        $C \leftarrow \tilde{C}$
                    }
                }
            }

            \nl Update temperature: $T \leftarrow \lambda T$
        }

        \nl Update $C^{(k+1)} \leftarrow C$, $k \leftarrow k + 1$
    }

    \nl \Return{$C$}
    
\caption{Hybrid gradient-based simulated annealing}
\label{alg:1}
\end{algorithm}
\end{minipage}
\end{center}

Let us define the gradient as follows:

\begin{equation}\label{gradient1}
\nabla \mathcal{J} (C) = \left(
\begin{array}{c}
 k_1 - k_0 \int_0^{\mathcal{T}} (Q_5(t; c_1, c_2) -Q_1(t; c_1, c_2) S(t; c_1, c_2) dt  \\
 k_2 - k_0 \int_0^{\mathcal{T}} (Q_4(t; c_1, c_2) -Q_3(t; c_1, c_2) I(t; c_1, c_2) dt
\end{array}
\right).
\end{equation}

This expression of the directional derivative facilitates the development of a conceptual gradient-based iterative algorithm to determine the optimal control pair \((c_1^*, c_2^*)\). Furthermore, the existence of an optimal control can be established using the approach outlined in~\cite{anicta2011introduction}. 

Now, to identify the global optimum, we employ the hybrid descent algorithm with simulated annealing, a global search algorithm discussed in~\cite{yiu2004hybrid}. The expression derived in \eqref{gradient1} provides the descent direction to reach a local optimum, while the stochastic nature of the simulated annealing approach helps bypass local optima and reach the global optimum. The detailed steps of the algorithm are presented in Algorithm~\ref{alg:1}. The implementation of this algorithm through numerical simulations is presented in the Results section.

\section{Numerical simulation}\label{sec5}
We numerically solve the system \eqref{model1} using the classical fourth-order Runge–Kutta method and investigate various dynamical and control aspects. Unless otherwise specified, all parameter values used in the simulations are taken from Table~\ref{tab:parameters}. The initial values for solving the system are set as: $(S(0), E(0), I(0), R(0), V(0)) = (1 \times 10^9, 0, 1, 0, 0)$. Under this numerical framework, we examine the sensitivity of key parameters, the influence of the transmission rate on epidemic characteristics, the effect of control parameters on infection dynamics, and the evaluation of globally optimal control strategies.

\subsection{Sensitivity analysis}

We calculate the normalized forward sensitivity indices \cite{nadim2021short} w.r.t. to $\mathcal{R}_c$ to have an overview of the most significant parameters that lead to the growing or decaying nature of malware propagation. We consider the parameters: $\beta$, $c_1$ , $c_2$, $\eta_1$, $\eta_2$, $\sigma_1$, $\sigma_2$ and calculate their normalized forward sensitivity indices in $\mathcal{R}_c$, using the values of specific parameters in Table~\ref{tab:parameters}. The forward normalized sensitivity index for a variable $\mathcal{G}$ w.r.t. the parameter $\xi$ is given by:
\begin{equation}\label{sensitivity_formula}
\mathcal{X}_{\mathcal{F}}^{\xi} = \frac{\partial \mathcal{G}}{\partial \xi} \times \frac{\xi}{\mathcal{F}},
\end{equation}
where, $\mathcal{G}$ is explicitly dependent on $\xi$. If $\mathcal{X}_{\mathcal{G}}^{\xi} =k$ then it means that if the value of $\xi$ increases by $1 \%$, then $\mathcal{G}$ will increase by $|k| \%$ (if $k$ is positive) or decrease by $|k| \%$ (if $k$ is negative), while all other parameter values remain fixed. The sensitivity indices are given in Table \ref{sensitivity_table} and in Fig \ref{sensitivity_fig}. The sensitivity indices show that $\sigma_1$, $\sigma_2$, $\beta$ are positively correlated with malware transmission threshold $\mathcal{R}_c$, and $c_1$, $c_2$, $\eta_1$, $\eta_2$ are negatively correlated with $\mathcal{R}_c$. Moreover, $\beta$  (malware transmission rate), $c_1$ (vaccination rate of susceptible devices), $c_2$(treatment to infected devices through reinstallation) and $\eta_1$ (rate of user-initiated device and password resets) appear to be the four most significant parameters.

\begin{figure}[ht!]
\begin{center}
		\includegraphics[scale=0.55]{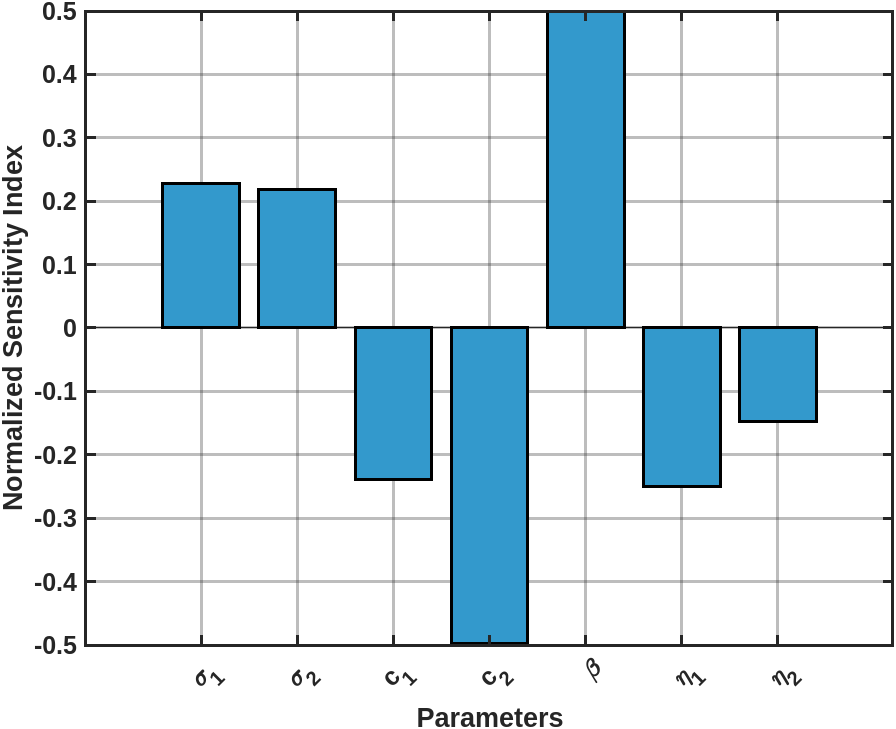}
\caption{Normalized forward sensitivity indices of parameters on malware transmission threshold $\mathcal{R}_c$.}
\label{sensitivity_fig}
\end{center}
\end{figure}

\begin{table}[!ht]
    \centering
    \begin{tabular}{l|l|l|l|l|l|l}
        \hline
        $\mathcal{X}_{\mathcal{R}_c}^{\sigma_1}$ &  $\mathcal{X}_{\mathcal{R}_c}^{\sigma_2}$ & $\mathcal{X}_{\mathcal{R}_c}^{c_1}$&$\mathcal{X}_{\mathcal{R}_c}^{c_2}$ & $\mathcal{X}_{\mathcal{R}_c}^{\beta}$ & $\mathcal{X}_{\mathcal{R}_c}^{\eta_1}$ & $\mathcal{X}_{\mathcal{R}_c}^{\eta_2}$  \\
        \hline
         0.2277 &  0.2186 & -0.2396 & -0.4983 &0.5 & -0.2495 & -0.1469 \\
        \hline
    \end{tabular}
    \caption{Normalized forward sensitivity indices of parameters on $\mathcal{R}_c$.}
    \label{sensitivity_table}
\end{table}

\subsection{Impact of transmission rate}

The transmission rate $\beta$ is one of the most influential parameters, as obtained from the sensitivity analysis. Thus, it is immediate to study the impact of $\beta$ on some key epidemic characteristics. For an infected curve $I(t)$ in a specific time interval $[0, \mathcal{T}]$, the following three important characteristics can quantify the severity of the infection progression:

\begin{figure}[ht!]
\begin{center}
		\mbox{\subfigure[]{\includegraphics[scale=0.6]{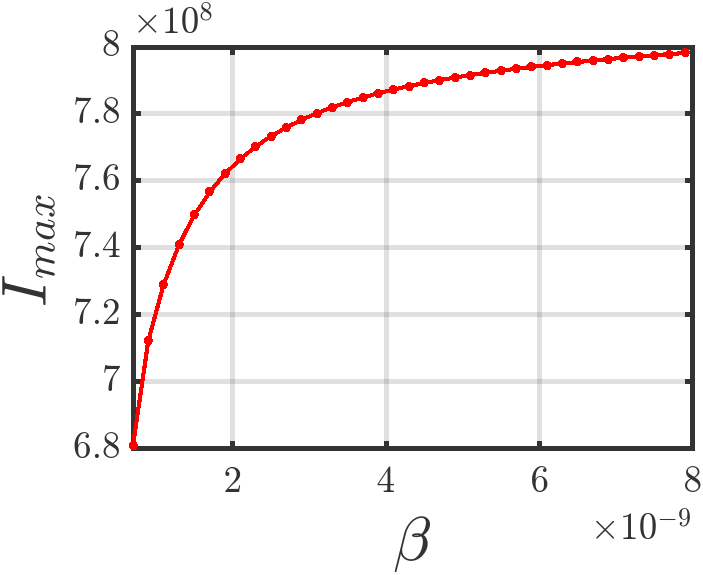}}
		\subfigure[]{\includegraphics[scale=0.6]{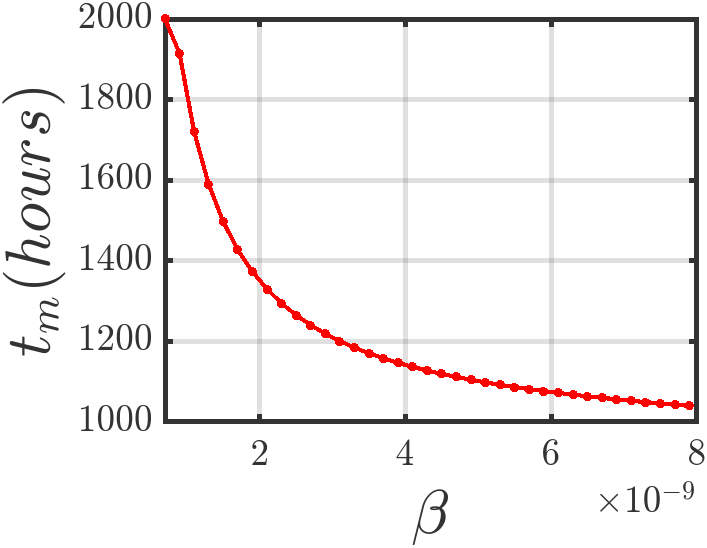}}}
        \mbox{
        \subfigure[]{\includegraphics[scale=0.6]{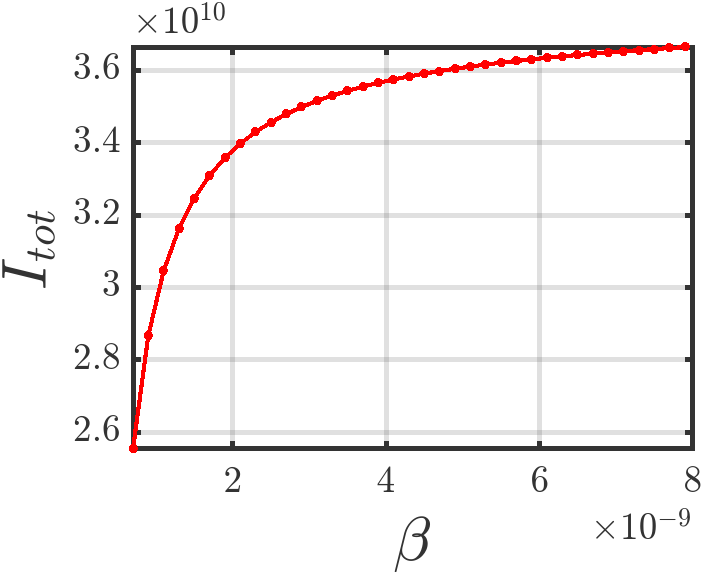}}
		}
\caption{Impact of $\beta$ on (a) maximum number of infected $I_{max}$; (b) time to maximum infected $t_m$; and (c) total number of infected $I_{tot}$. This simulation is carried out with $c_1=c_2=0$ and the rest of the parameters are chosen as mentioned in Table \ref{tab:parameters}.}
\label{beta_vs_I}
\end{center}
\end{figure}

\noindent
\begin{itemize}
    \item {\bf Maximum number of infected.}\;\; Maximum number of infected is denoted by $I_{max}$ and is defined by
$$I_{max}=\max_{t \in [0, \mathcal{T}]} I(t).$$

\item {\bf Time to maximum infected.}\;\; Time to maximum infected is the time when the infected curve $I(t)$ attains its maximum. It is denoted by $t_m$ and defined by
$$t_m = \arg \max_{t \in [0, \mathcal{T}]} I(t).$$

\item {\bf Total number of infected.}\;\; The total number of infected within the interval $[0, \mathcal{T}]$ is denoted by $I_{tot}$ and is defined by
$$I_{tot}= \alpha \int_0^{\mathcal{T}} E(t) dt.$$
Note that the term $\alpha E(t)$ denotes daily number of new devices joining the infected compartment and hence taking integration from $0$ to $\mathcal{T}$ yields $I_{tot}$.
\end{itemize}

Figure~\ref{beta_vs_I} shows the impact of $\beta$ on $I_{\max}$, $t_m$, and $I_{\text{tot}}$. It is observed that $I_{\max}$ increases with an increase in $\beta$; however, the rate of increase in $I_{\max}$ decreases, and $I_{\max}$ reaches saturation for larger values of $\beta$. A similar observation can be drawn for the relationship between $I_{\text{tot}}$ and $\beta$. However, $t_m$ follows an inversely proportional relationship with $\beta$. If the transmission rate increases, then the maximum number of infected individuals will be reached earlier. This observation can indeed help in being better informed about the critical time and staying better prepared accordingly.

\begin{figure}[ht!]
\begin{center}
		\mbox{\subfigure[]{\includegraphics[scale=0.5]{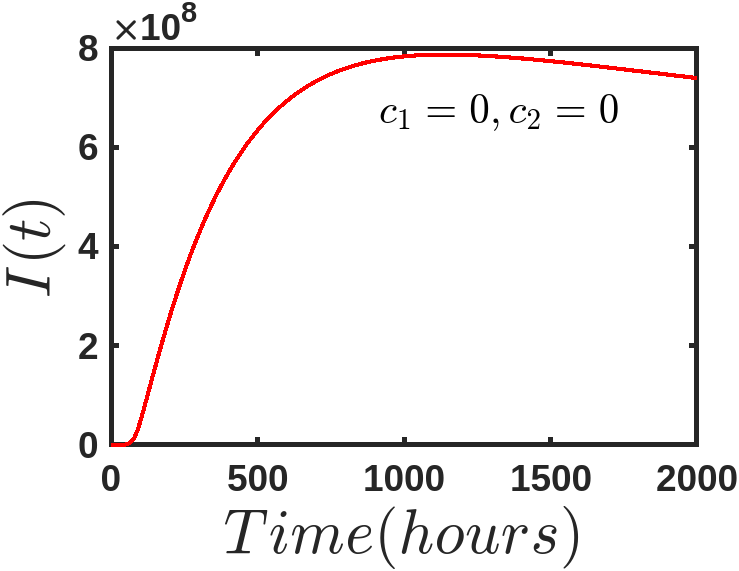}}
		\subfigure[]{\includegraphics[scale=0.5]{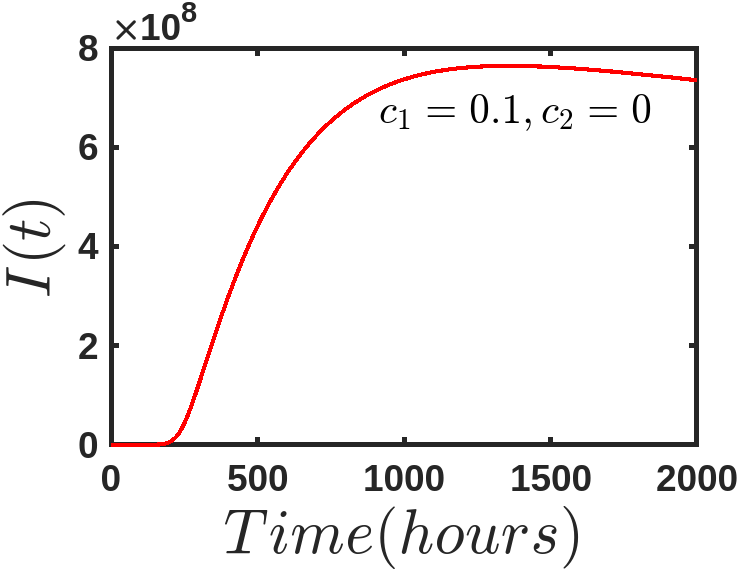}}}
        \mbox{
        \subfigure[]{\includegraphics[scale=0.5]{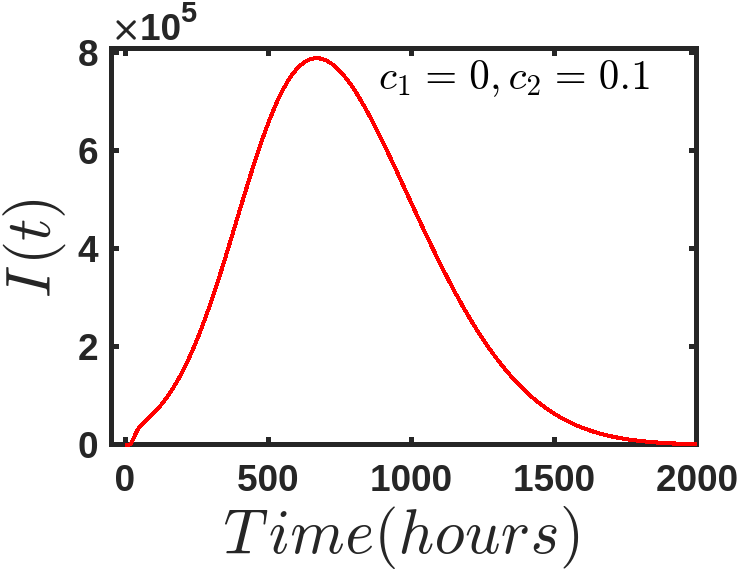}}
		 \subfigure[]{\includegraphics[scale=0.5]{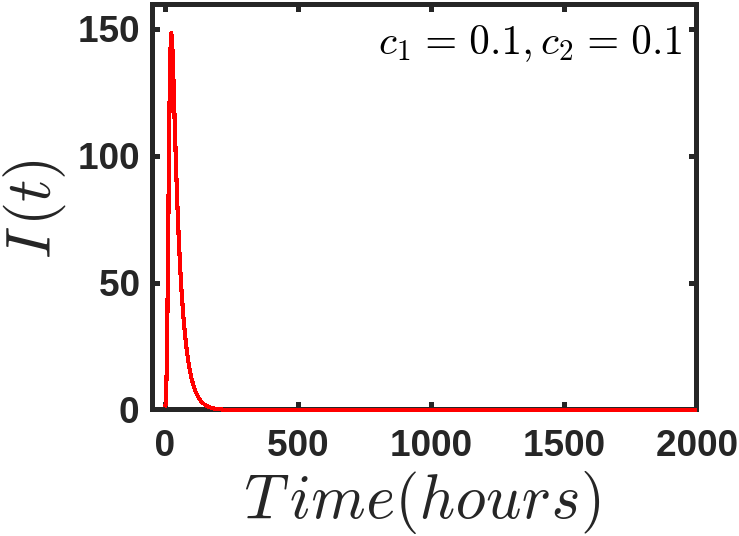}}
         }
\caption{Impact of vaccination ($c_1$) and treatment ($c_2$) on the infected devices counts. The rest of the parameters are chosen as mentioned in Table \ref{tab:parameters}.}
\label{effect_c1_c2}
\end{center}
\end{figure}

\subsection{Combined impact of control measures and transmission rate}

The impact of the strength of different intervention strategies, such as vaccination or treatment, depends strongly on the speed of malware progression. In this subsection, we investigate the combined effects of intervention strengths (e.g., $c_1$, $c_2$) and the transmission rate $\beta$. The results are presented in Figure~\ref{effect_beta_c1_c2}. The upper panel of Figure~\ref{effect_beta_c1_c2} shows the effect of $c_1$ on $I_{max}$, $t_m$, and $I_{tot}$ for different values of $\beta$. For small values of $\beta$, increasing $c_1$ leads to a substantial reduction in the peak infection level $I_{max}$ and the cumulative number of infected $I_{tot}$, while simultaneously delaying the time to peak infection $t_m$. However, as $\beta$ increases, the effectiveness of $c_1$ diminishes, and both $I_{max}$ and $I_{tot}$ remain high even for larger values of $c_1$, indicating that strong transmission can overwhelm this intervention. The lower panel illustrates the impact of $c_2$. Increasing $c_2$ consistently reduces $I_{max}$ across the entire range of $\beta$, with a significant suppression of the infection peak even at higher transmission rates. Moreover, higher values of $c_2$ significantly reduce the total infection burden $I_{tot}$ and shorten the outbreak duration. These results suggest that $c_2$ is relatively more effective in controlling malware spread under high-transmission scenarios, whereas $c_1$ is primarily effective when the transmission rate is relatively low.

\begin{figure}[ht!]
\begin{center}
\mbox{\subfigure[]{\includegraphics[width=0.3\textwidth,height=4.5cm]{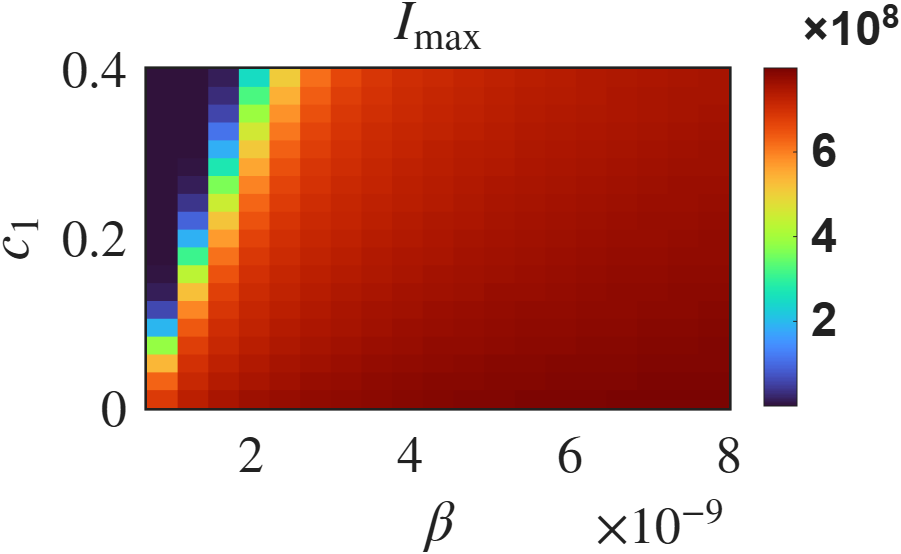}}
		\subfigure[]{\includegraphics[width=0.3\textwidth,height=4.5cm]{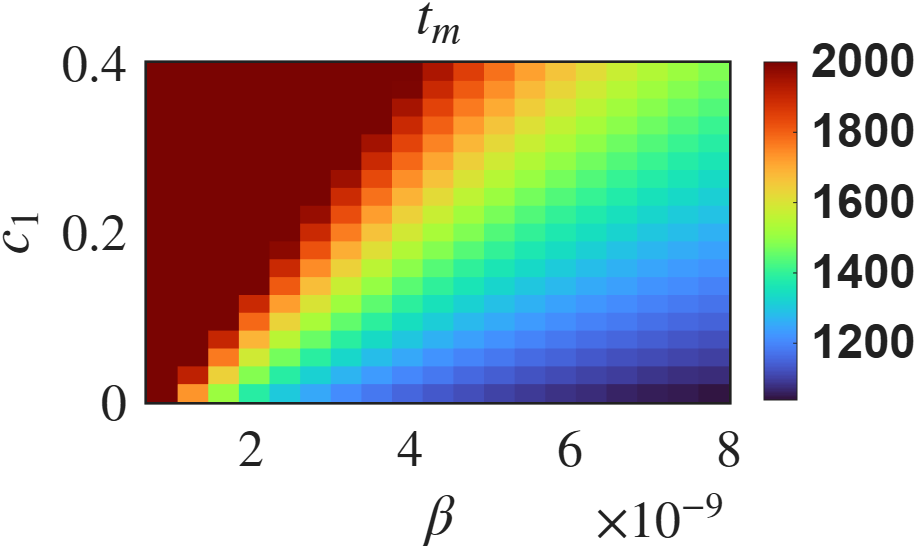}}
        \subfigure[]{\includegraphics[width=0.3\textwidth,height=4.5cm]{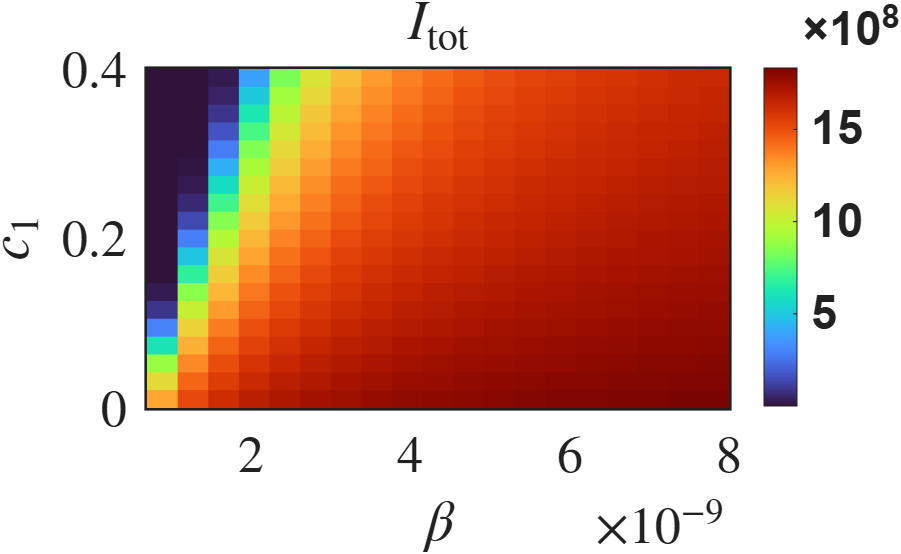}}
        }
		\mbox{\subfigure[]{\includegraphics[width=0.3\textwidth,height=4.5cm]{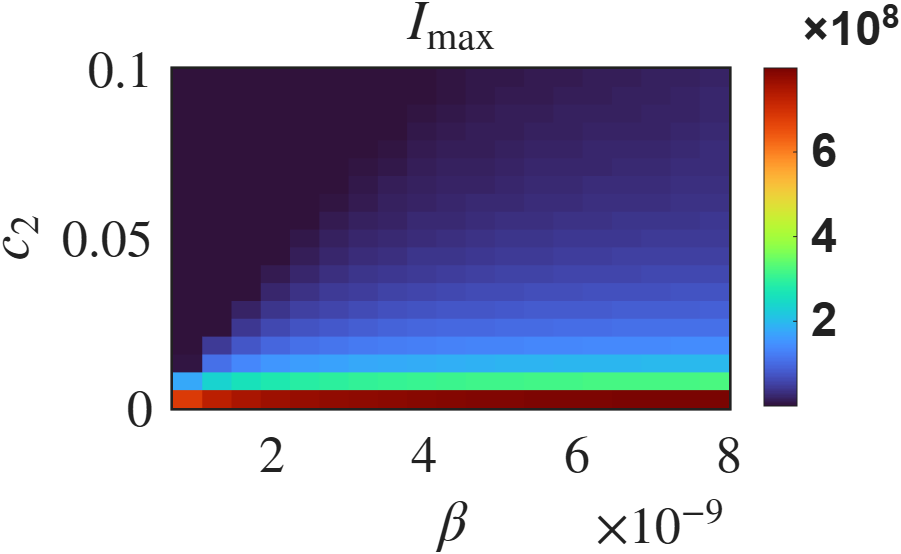}}
		\subfigure[]{\includegraphics[width=0.3\textwidth,height=4.5cm]{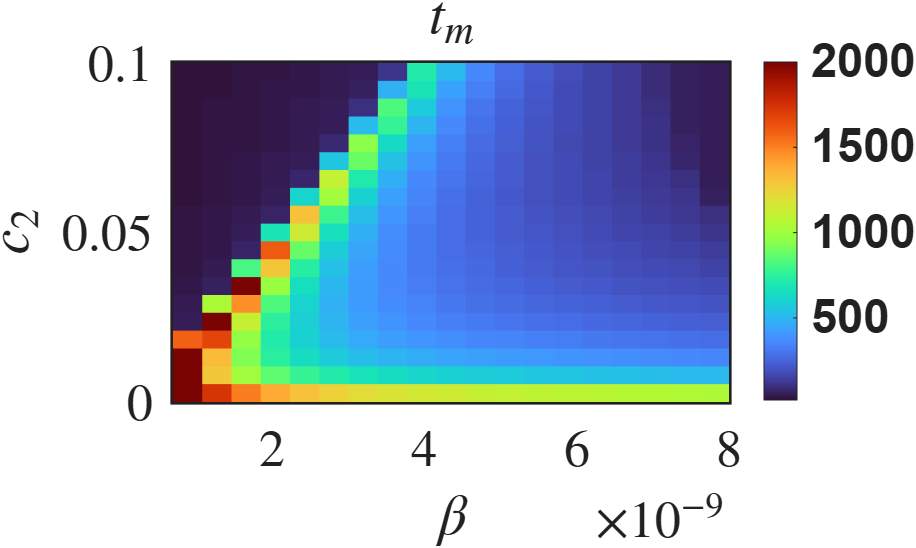}}
        \subfigure[]{\includegraphics[width=0.3\textwidth,height=4.5cm]{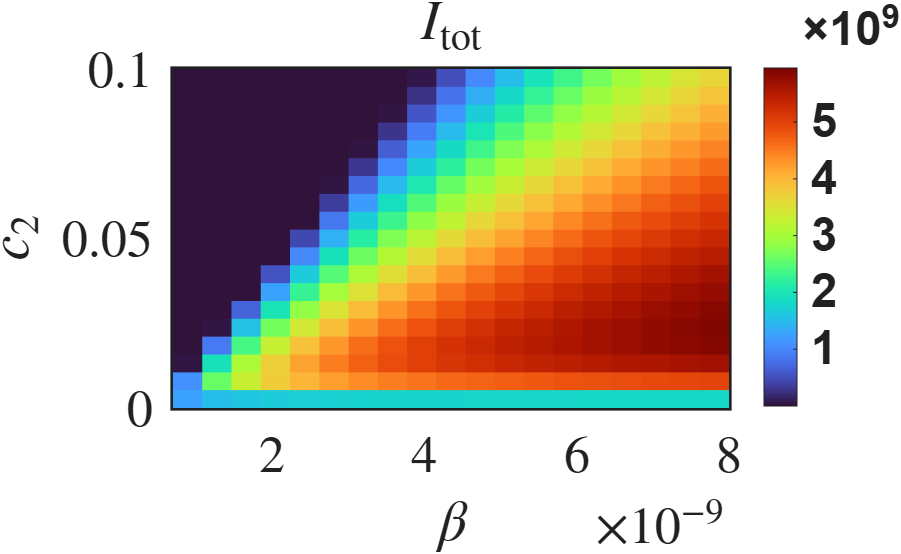}}
        }
       
\caption{Combined impact of the transmission rate $\beta$ and intervention strengths on the malware dynamics. Panels (a)–(c) show the effects of $c_1$ on $I_{\max}$, $t_m$, and $I_{\text{tot}}$, respectively, for varying values of $\beta$. Panels (d)–(f) illustrate the corresponding effects of $c_2$ on $I_{\max}$, $t_m$, and $I_{\text{tot}}$. The rest of the parameters are chosen as mentioned in Table \ref{tab:parameters}. For the upper panel $c_2$ is kept to be $0$ and in the lower panel $c_1$ is kept to be $0$.}
\label{effect_beta_c1_c2}
\end{center}
\end{figure}

\subsection{Impact of vaccination and treatment}

When an epidemic emerges in a network of IoT devices, two immediate possible control measures are the vaccination of susceptible devices and the treatment of infected devices. When it comes to the implementation of both these control measures, a natural question arises: which control measure is most effective, and how does their combination influence the epidemic progression? To answer this, four different scenarios are considered: (i) no vaccination and treatment ($c_1=c_2=0)$; (ii)  only vaccination and no treatment ($c_1=0.1, c_2=0$); (iii) only treatment and no vaccination ($c_1=0, c_2=0.1$); (iv) both vaccination and treatment ($c_1=c_2=0.1$). The outcomes of the infected compartment $I(t)$ under these four scenarios are shown in Figure~\ref{effect_c1_c2}. The results show that without any control measures, the infected count reaches up to $8 \times 10^8$. The administration of only vaccination fails to reduce the infection count significantly, and it remains on the order of $10^8$. In contrast, if only treatment is administered, the infection count effectively reduces to the order of $10^5$. Moreover, if both vaccination and treatment are administered, the epidemic progression is successfully prevented, and the infection count remains below 150. These insights consequently point towards the question of the optimal combination of vaccination and treatment to combat the epidemic.

\subsection{Global optimal allocation of controls}

In this subsection, the performance of Algorithm~\ref{alg:1} is demonstrated through numerical simulation. The original system \eqref{model1} and the adjoint system \eqref{adjoint_eqn_1} are solved using the Runge-Kutta fourth-order method. We assume $\mathcal{T} = 2000$, $m_0 = 1$, $k_1 = 0.2$, and $k_2 = 0.3$. Though this choice of cost parameters is hypothetical and intended for illustration purposes, the algorithm can be implemented similarly with other cost values. However, the values chosen here follow the cost structure $m_0 \geq k_2 \geq k_1$. We normalize the cost of infection to 1, which serves as a baseline reference point for comparing the relative costs of the control parameters. Treating infected devices often involves complex procedures such as malware removal, data recovery, hardware replacement, etc., which may require manual effort and system downtime. In contrast, vaccinating susceptible devices is generally automated and can be administered at scale with minimal cost. Based on this reasoning, we choose the aforementioned cost values. However, one can adjust the cost parameters as per the requirements and reiterate the same algorithm.

\begin{figure}[ht!]
\begin{center}
		\includegraphics[scale=0.28]{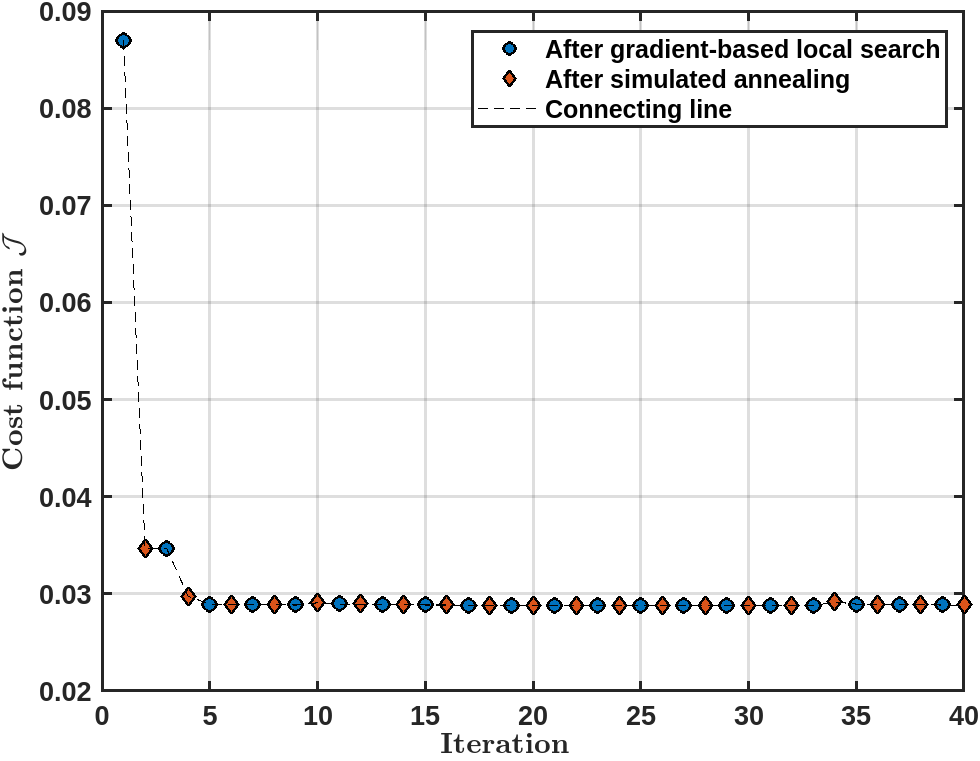}
\caption{A representative convergence trend of the cost function $\mathcal{J}$ over iterations.}
\label{cost_function_fig1}
\end{center}
\end{figure}

We start with an initial choice $(c_1, c_2)=(0.1, 0.1)$ and then find the local minima using the gradient-based search. Then we implement the simulated annealing approach with initial temperature $T=0.02$, cooling rate $\lambda=0.9$. The number of iterations in the simulated annealing part is kept sufficiently large to avoid premature exit from the process.

A representative convergence history of the cost function is shown in Figure~\ref{cost_function_fig1}. The optimal solution appears to be $(c_1^*, c_2^*)=(0.01, 0.08)$ with the minimum value of cost function $\mathcal{J}(c_1^*, c_2^*)=0.028$. Moreover, to ensure the $(0.01, 0.08)$ is the global optimal solution, we started with different points in the $(c_1, c_2)$-plane (e.g., $(0.1, 0.35)$, $(0.25, 0.2)$, $(0.35, 0.1)$, and $(0.1, 0.1)$) and in all the cases the Algorithm \ref{alg:1} produces the same global optimal solution $(0.01, 0.08)$. However, the number of iterative steps to reach the global optimal varies with the starting point. The iterative convergence trajectories for different starting points are shown in Figure~\ref{global_conv_fig2}.

\begin{figure}[ht!]
\begin{center}
		\includegraphics[scale=0.4]{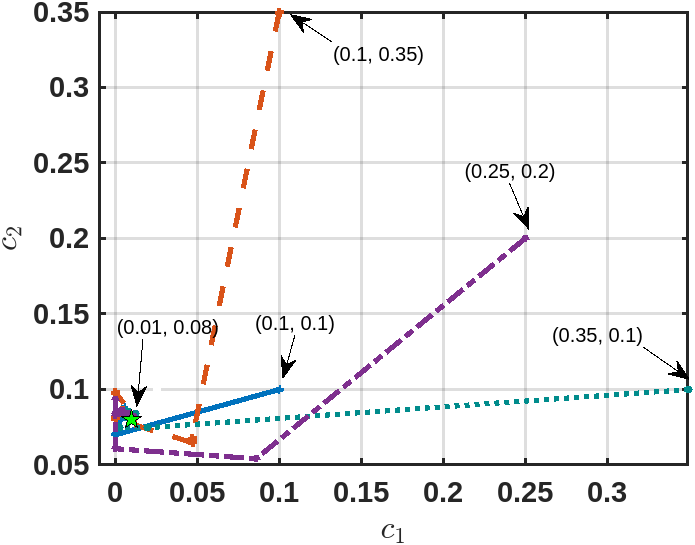}
\caption{Iterative convergence trajectories in the $(c_1,c_2)$-plane for different starting point. The green star  is the global optimal solution $(c_1^*, c_2^*)=(0.01, 0.08)$.}
\label{global_conv_fig2}
\end{center}
\end{figure}

The global optimal solution $(c_1^*, c_2^*) = (0.01, 0.08)$ suggests that vaccinating the susceptible at a rate of $1\%$ and treating the infected at a rate of $8\%$ yields the optimal outcome. To interpret this in terms of total control efforts, let us define the relative control rates as follows:
$$\widehat{c}_1^*=\frac{c_1^*}{c_1^*+c_2^*}\approx 11\%, \;\; \text{and}\;\;\widehat{c}_2^*=\frac{c_2^*}{c_1^*+c_2^*}\approx 89\%.$$
This suggests that $11\%$ of the total control effort should be dedicated to prevention (i.e., vaccination of susceptible devices), and $89\%$ to mitigation (i.e., treatment of infected devices).

\section{Model calibration using Windows malware dataset}\label{section_calibration} 
In this section, we calibrate the proposed model \eqref{model1} by fitting the infection counts to the virus propagation data obtained from the “Windows Malware Dataset with PE API Calls” \cite{catak2021data,kovtun2024entropy}. The transmission rate $\beta$ is estimated using this virus propagation dataset. The model parameters are estimated by minimizing the Sum of Squared Errors (SSE) between the observed epidemic data and the corresponding model outputs, using the SSE objective function
\begin{equation*}
\mathrm{SSE}(\boldsymbol{\theta}) = \sum_{i=1}^{N} \left( y_i - \hat{y}_i(\boldsymbol{\theta}) \right)^2,
\end{equation*}
where $y_i$ denotes the reported number of cases at observation time $i$, $\hat{y}_i(\boldsymbol{\theta})$ is the model prediction depending on the parameter vector $\boldsymbol{\theta}$, and $N$ is the total number of observations. The parameter identification problem is therefore formulated as the unconstrained optimization problem
$\min_{\boldsymbol{\theta}} \; \mathrm{SSE}(\boldsymbol{\theta}).$
\begin{figure}[ht!]
\begin{center}
		\mbox{\subfigure[]{\includegraphics[scale=0.54]{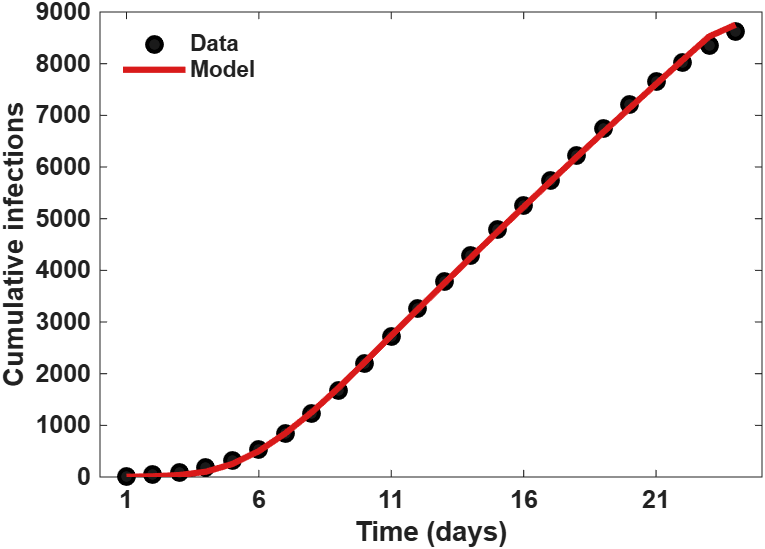}}
		\subfigure[]{\includegraphics[scale=0.54]{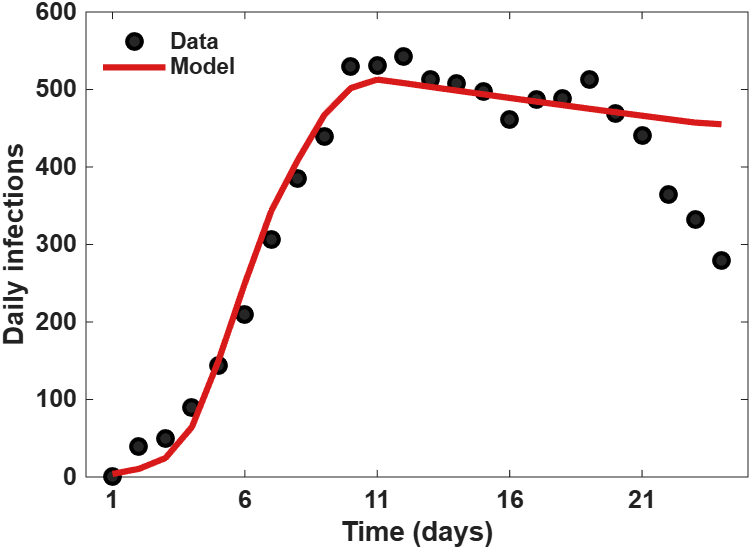}}}
       
\caption{Model fitting with real data of virus propagation. (a) Cumulative data fitting; (b) corresponding daily data fitting.}
\label{fit_1}
\end{center}
\end{figure}

\begin{figure}[ht!]
\begin{center}
		\includegraphics[width=0.8\textwidth,height=6cm]{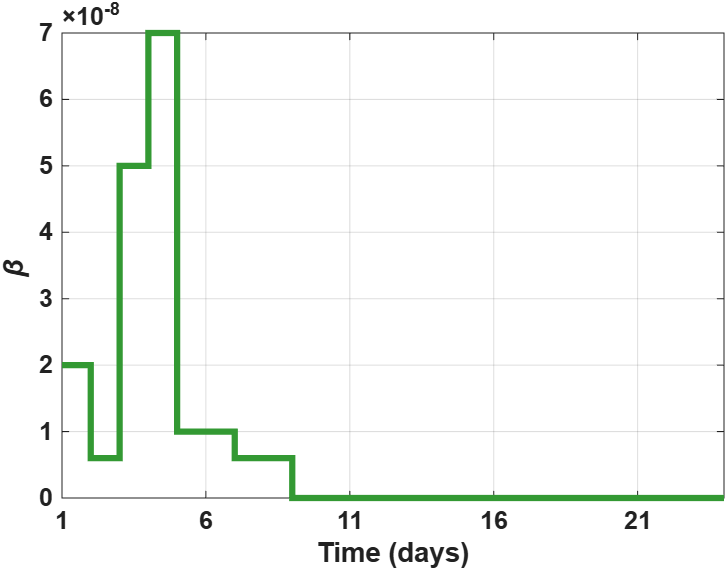}
\caption{Estimated value of $\beta$ from the cumulative data fitting.}
\label{estimated_beta_1}
\end{center}
\end{figure}

\begin{figure}[ht!]
\begin{center}
		\mbox{\subfigure[]{\includegraphics[scale=0.54]{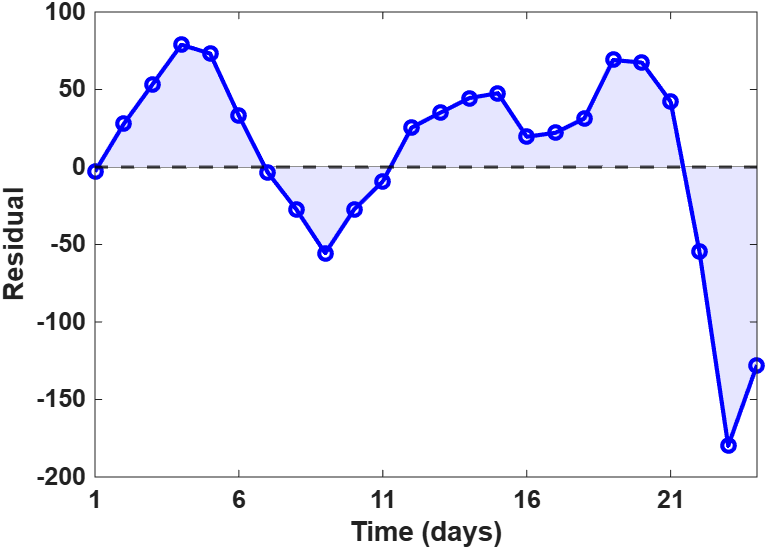}}
		\subfigure[]{\includegraphics[scale=0.54]{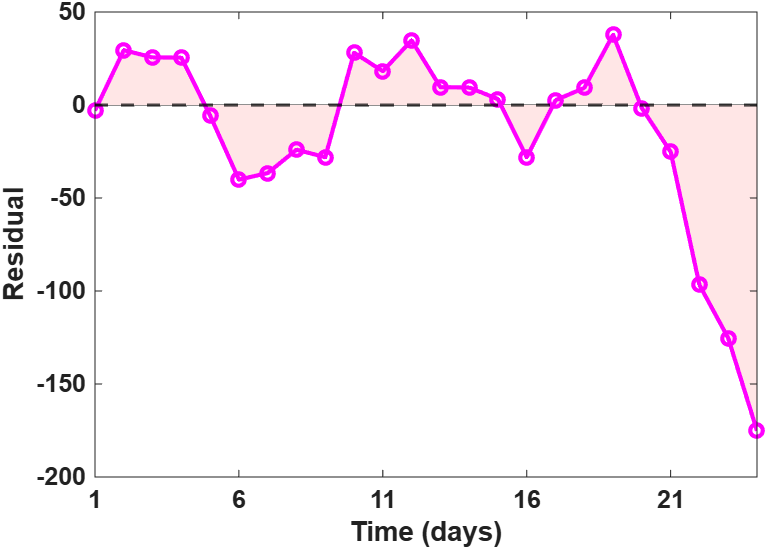}}}
       
\caption{ Residual plots corresponding to the fitting in Figure~\ref{fit_1}: (a) residual for the cumulative data fitting; (b) residual for the daily data fitting.}
\label{residual_1}
\end{center}
\end{figure}

\begin{figure}[ht!]
\begin{center}
		\mbox{\subfigure[]{\includegraphics[scale=0.54]{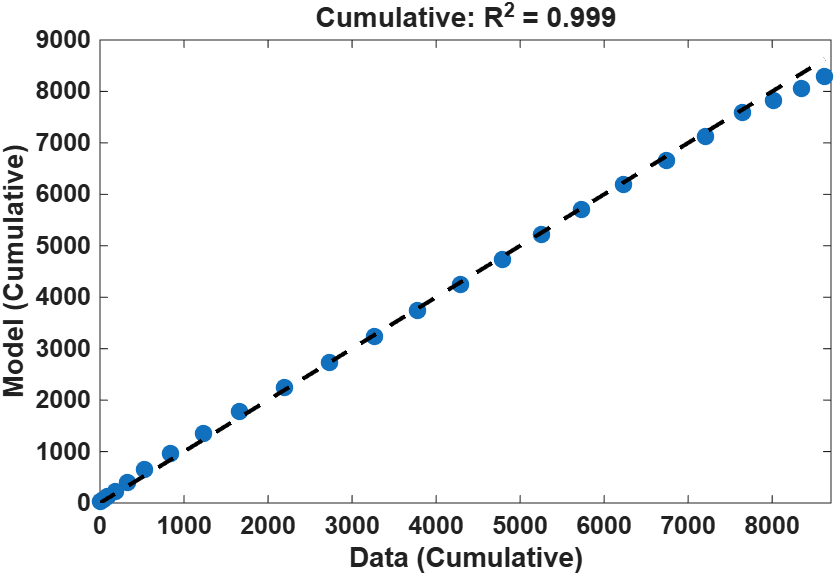}}
		\subfigure[]{\includegraphics[scale=0.54]{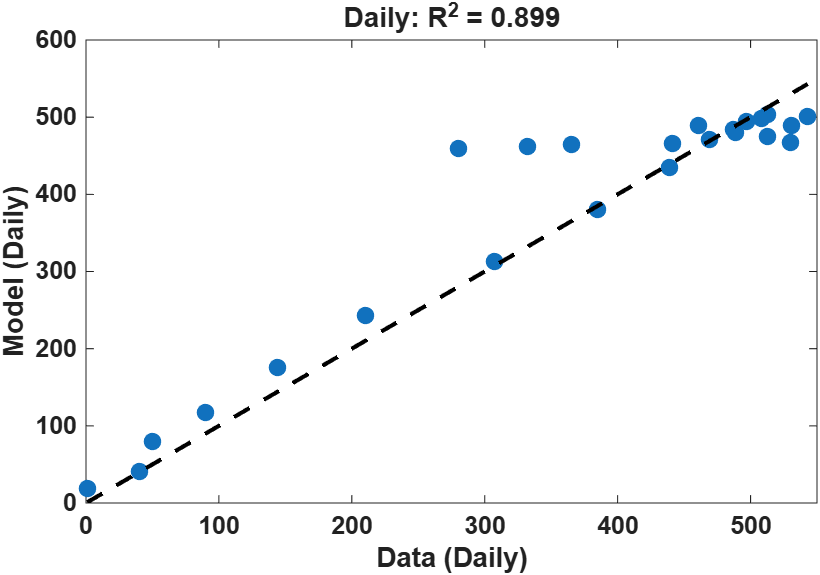}}}
       
\caption{ Plots of the $R^2$ values corresponding to the fitting in Figure~\ref{fit_1}: (a) the $R^2$ value for the cumulative data fitting; (b) the $R^2$ value for the daily data fitting.
}
\label{R2_1}
\end{center}
\end{figure}

\begin{figure}[ht!]
\begin{center}
		\includegraphics[width=0.8\textwidth,height=6cm]{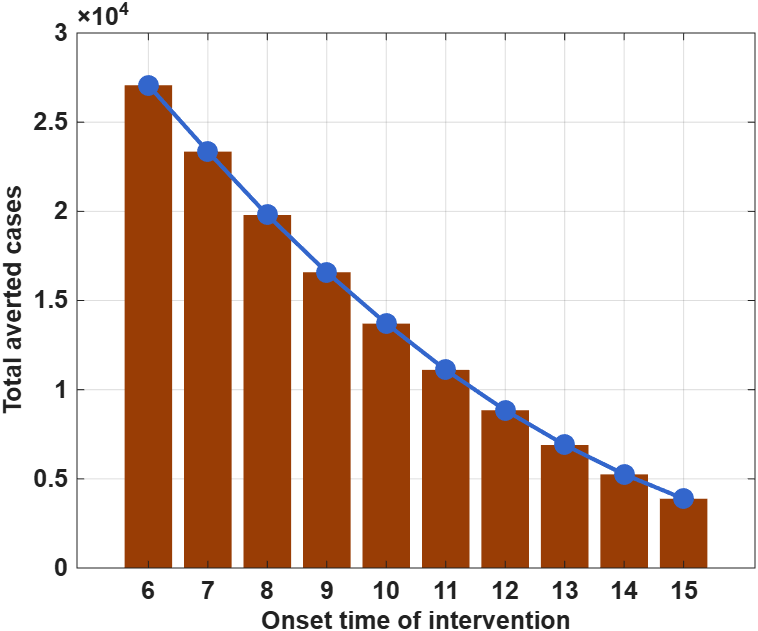}
\caption{Plot of the total number of averted cases as a function of the intervention onset time.}
\label{averted_1}
\end{center}
\end{figure}

To solve this problem, a hybrid optimization strategy is adopted in which an initial gradient-based search is followed by a derivative-free refinement using the MATLAB routine \texttt{fminsearch} \cite{kelley1999iterative}. The solver \texttt{fminsearch} is based on the Nelder--Mead simplex method and seeks a local minimizer of a scalar objective function
$
\min_{\boldsymbol{\theta} \in \mathbb{R}^p} \; f(\boldsymbol{\theta}),
$
without requiring derivative information. At each iteration, a simplex composed of $p+1$ candidate points is updated by geometric transformations such as reflection, expansion, contraction, and shrinkage. For instance, the reflected point is computed as
$$
\boldsymbol{\theta}_r = \boldsymbol{\theta}_c + \alpha \left( \boldsymbol{\theta}_c - \boldsymbol{\theta}_h \right),
$$
where $\boldsymbol{\theta}_c$ denotes the centroid of the simplex excluding the worst vertex $\boldsymbol{\theta}_h$, and $\alpha>0$ is the reflection coefficient. The iteration continues until prescribed convergence criteria on the objective value or parameter variation are satisfied. The results of the model fitting are presented in Figure~\ref{fit_1}. The black dots represent the observed data, while the red curves denote the model-predicted trajectories. The model was fitted to the cumulative infection data in order to estimate the transmission rate $\beta$, and the corresponding fit is shown in Figure~\ref{fit_1}(a). The associated fit to the daily infection data is displayed in Figure~\ref{fit_1}(b). The cumulative infection curve is reproduced reasonably well by the model; however, some discrepancies are observed in the daily infection curve toward the end of the observation period. The estimated time-varying transmission rate $\beta(t)$ is illustrated in Figure~\ref{estimated_beta_1}. It can be seen that $\beta$ initially increases, reaches a peak around $t=5$, and then gradually decreases thereafter. The goodness-of-fit analysis, including the residual plots and the coefficient of determination ($R^2$), is shown in Figure~\ref{residual_1} and Figure~\ref{R2_1}, respectively. Due to the lack of intervention-related data in this fitting, we did not consider interventions explicitly, and their average impact was assumed to be captured by the weekly varying transmission rate. However, we use the calibrated model to assess the qualitative impact of the onset time of intervention strategies. Figure~\ref{averted_1} shows the total averted cases as a function of the time of onset of intervention. It shows an exponentially decaying relationship between the averted cases and the time of introduction of the interventions.


\section{Conclusion}\label{sec6}
Mathematical models are a valuable way to study how malware spreads and to evaluate possible measures for controlling it. In this work, we propose a generic SEIR-type epidemic model using a system of ODEs to describe the characteristics of malware propagation in an IoT network. We proved the positivity and boundedness of the system. Furthermore, we analytically derived expressions for the malware propagation threshold and the separatrix of epidemic regions in the control space. We also studied the local asymptotical stability and global stability of the malware-free equilibrium point. Furthermore, we proved the existence of endemic equilibrium when $\mathcal{R}_0>1$ and we observed the existence of forward bifurcation in the system \eqref{model1}. Using normalized forward sensitivity indices, we identified that the transmission rate, the two control parameters, and the rates of user-initiated device reset and credential reset are the four most sensitive parameters influencing the malware propagation threshold. 

We also studied the explicit dependency of the transmission rate ($\beta$) on the maximum number of infected ($I_{\text{max}}$), the time to maximum infected ($t_m$), and the total number of infected ($I_{\text{tot}}$). It is observed that all three quantities follow a nonlinear relationship with $\beta$. Initially, $I_{\text{max}}$ and $I_{\text{tot}}$ increase with $\beta$, then reach a plateau beyond certain values of $\beta$, whereas $t_m$ exhibits an inverse nonlinear relationship with $\beta$. These trends can be explained by the fact that, at lower values of $\beta$, an increase in the transmission rate accelerates the spread, leading to higher peaks ($I_{\text{max}}$) and larger epidemic sizes ($I_{\text{tot}}$). However, beyond a threshold, the susceptible population becomes depleted more rapidly, limiting further growth and causing the curves to saturate. Conversely, as $\beta$ increases, the infection spreads faster, which shortens the time required to reach the peak ($t_m$), thereby explaining the inverse nonlinear relationship. These explicit relationships shed light on the infection curve and help in understanding its dynamics more clearly.

We proposed a hybrid gradient-based global optimization control algorithm for the SEIRV-type model. We employed the concept of simulated annealing, which, to the best of our knowledge, has not previously been applied to the control of SEIRV-type multi-compartmental epidemic models. We explicitly derived the gradient-descent direction for the cost function related to the SEIRV model for local search of optimum and then used the stochastic search through simulated annealing to reach the global optimum. It is noteworthy to mention that the derivation of our algorithm can be easily implemented on any multi-compartmental epidemic model described by ODEs, however, the derivation of gradients might differ across models. This global search can somehow help in obtaining better control strategies.

The implementation of the optimization algorithm is subject to specific prerequisites, such as the cost coefficients of infection and controls, which may be fixed based on the scenario. However, for illustration purposes, we showed that if $m_0 = 1$, $k_1 = 0.2$, and $k_2 = 0.3$ (following the cost structure $m_0 \geq k_2 \geq k_1$), then ($c_1^*, c_2^*$) = $(0.01, 0.08)$, i.e., a combination of both treatment and vaccination appears to be the global optimal control. In terms of total control efforts, approximately $11\%$ should be dedicated to vaccination and approximately $89\%$ to treatment.

The present study offers a theoretical foundation for understanding IoT malware dynamics through an SEIRV-based formulation and demonstrates how classical epidemic ideas can be adapted to cybersecurity settings. As a natural direction for future work, the model can be enriched by incorporating additional features of modern IoT ecosystems, such as device heterogeneity, multi-layered network structures, and varying patch or update cycles. Real IoT malware often exhibits adaptive and device-specific behaviour; therefore, extending the model to include time-dependent or state-dependent parameters would make the framework more flexible. Another promising avenue is the integration of empirical information—such as network logs, threat-intelligence data, or expert-informed cost coefficients—to calibrate model parameters and further enhance interpretability. The cost coefficients ($m_0$, $k_1$, and $k_2$) can be estimated more realistically by incorporating economic information, system-level implementation assessments, and expert input to better reflect the practical cost-effectiveness of controlling malware spread. These developments would help bridge theoretical insights with real-world IoT security practices, strengthening the practical relevance of the modelling framework.

\section*{Data availability} The sources of all data used/mentioned during this study are cited in the article. 

\section*{Acknowledgements}

This research is partly funded by Ministry of Electronics and Information Technology (MeitY), Government of India, under the Cybersecurity Research \& Development Scheme.

\section*{Author Contributions Statement} SG and VAK jointly formulated the problem. SG performed the analytical analysis and numerical simulations and wrote the main manuscript. VAK contributed to the interpretation of the results and supervised and reviewed the work.

\section*{Conflict of interest}
The authors do not have any conflicts of interest.



 \bibliographystyle{plain} 
 \bibliography{reference}

\end{document}